\def\qed{\rule{1.5mm}{3mm}}
\def\boxit#1{\vbox{\hrule\hbox{\vrule\kern4pt
  \vbox{\kern1pt#1\kern1pt}
\kern2pt\vrule}\hrule}}
\begin{document}

\title{Subset Sum Problems With Digraph Constraints}
\author{Laurent Gourv\`es\inst{1} \and J\'er\^ome Monnot\inst{1} \and Lydia Tlilane\inst{1}}

 \institute{Universit\'e Paris-Dauphine, PSL Research University, CNRS, LAMSADE, 75016 Paris, France
 \\\email{$\{$laurent.gourves,jerome.monnot,lydia.tlilane$\}$@dauphine.fr}
}

\maketitle

\begin{abstract}
We introduce and study four optimization problems that generalize the well-known subset sum problem. Given a node-weighted digraph, select a subset of vertices whose total weight does not exceed a given budget. Some additional constraints need to be satisfied. The (weak resp.) digraph constraint imposes that if (all incoming nodes of resp.) a node $x$ belongs to the solution, then the latter comprises all its outgoing nodes (node $x$ itself resp.). The maximality
constraint ensures that a solution cannot be extended without violating the budget or the (weak) digraph constraint. We study the complexity of these problems and we present some approximation results according to the type of digraph given in input, e.g. directed acyclic graphs and oriented trees.

\vspace{5mm}\noindent {\bf Key words.} Subset Sum, Maximal problems, digraph constraints, complexity, directed acyclic graphs, oriented trees, PTAS.
\end{abstract}

\section{Introduction}
This paper deals with four optimization problems which generalize the well-known {\sc Subset Sum Problem}  (SS in short). Given a digraph $G=(V,A)$ such that each $x \in V$ has a nonnegative weight $w(x)$, we search for $S \subseteq V$ satisfying some constraints. As for SS we have a {\em budget constraint} which imposes  that  $w(S) \equiv \sum_{x \in S} w(s)$ does not exceed a given bound $B$. We depart from SS by considering the following constraints. The {\em digraph constraint} imposes to insert a node in $S$ if one of its incoming nodes in $G$ appears in $S$. A weaker form, called {\em weak digraph constraint}, imposes to insert a node in $S$ if {\em all} its its incoming nodes in $G$ appear in $S$. Our last constraint requires maximality with respect to the previous constraints. A set $S$ satisfies the {\em maximality constraint} if there is no $S'\supset S$ satisfying the budget and the  (weak resp.) digraph constraint.

\smallskip

Given a digraph $G=(V,A)$ and a budget $B$, the four problems studied in this article are the following. {\sc Subset Sum with digraph constraints} (SSG in short) is to find $S$ that maximizes $w(S)$ under the budget and the digraph constraints. In {\sc Subset sum with weak digraph constraints} (SSGW in short), we seek $S$ that maximizes $w(S)$ under the budget and  the weak digraph constraints. For the
{\sc Maximal Subset Sum with digraph constraints} ({\sc Maximal} SSG in short) we search for  $S$ with minimum weight under the constraints of budget,  digraph and maximality. Finally,
{\sc Maximal Subset Sum with  weak digraph constraints} ({\sc Maximal} SSGW in short) aims to find $S$ with minimum weight under the constraints of budget, weak digraph and maximality.
The fact that we minimize $w(S)$ will become clear from the possible applications.

\smallskip

Let us motivate SSG in a scheduling context (other applications are given in \cite{kellerer2004knapsack}).
A processor is available during a period of length $B$
and there is a set of tasks to be executed on it. Each task $x$ is represented by a vertex of a digraph and has a duration $w(x)$. We seek a subset of tasks whose total duration does not exceed $B$.
The digraph provides dependency constraints between the tasks, i.e. there is an arc $(i,j)$ if task $i$ requires the output of task $j$. The goal is to maximize the utilization of the processor during the time window.

\smallskip

Keep this scheduling example but replace the processor by a lazy bureaucrat who has to execute some tasks. Everyday the bureaucrat is in his office for a period of length $B$ and the set of tasks $S$ that he selects must be executed within this period. The maximization of $w(S)$ does not reflect the wish of the lazy bureaucrat who is interested in working as little as possible. His goal is rather to minimize $w(S)$. However, $S=\emptyset$ is not a realistic solution because the worker's employer finds unacceptable to ignore a task if there is enough time to
execute it. This example, taken from \cite{ArkinBMS03}, motivates {\sc Maximal} SSG and its constraint of maximality.

\smallskip

Let us motivate  SSGW with another application. The different modules of a program are represented by a
digraph $G=(V,A)$ in the sense that $(x,y)$ belongs to $A$ whenever module $y$ receives informations from module $x$. An updated version of the program is to be deployed. Updating module $x$ induces
a cost of $w(x)$ and there is a global budget of $B$. We want to select a subset $S$ of modules, candidates for the update, such that $w(S) \leq B$ and if all the predecessors of a module $y$ are updated, then $y$ must also be updated otherwise $y$ works in a faulty manner.

\smallskip

In order to justify the study of {\sc Maximal}  SSGW, suppose the user of the program pays an external company
$B$ dollars for updating the software. If $S$ denotes the set of updated modules, then the revenue of the company, to be maximized, is equal to $B-w(S)$. Meanwhile, the user finds $S$ acceptable if $B$ is exceeded with any extra update.

%
%We justify SSGW with another application. The sites of a communication network are modelled as a
%digraph $G=(V,A)$ in the sense that $(x,y)$ belongs to $A$ whenever site $y$ receives some requests from
%site $x$. An updated format of request is to be deployed on the network. Updating a site $x$ induces
%a cost of $w(x)$ and there is a global budget of $B$. We want to select a subset $S$ of sites, candidates for the update, such that $\sum_{x \in S} w(s) \leq B$ and
%if all the predecessors of a site are updated, then this site must also be updated (otherwise it is useless).

\smallskip

Our purpose is to study SSG, SSGW, {\sc Maximal} SSG and {\sc Maximal} SSGW from a theoretical viewpoint. The complexity and approximability of these problems are analyzed for various topologies of the input digraph.

\smallskip

To the best of our knowledge, these problems are new, except SSG which is a special case of the \textsc{Partially-Ordered Knapsack} problem (also known as the \textsc{Precedence-Constrained Knapsack Problem} and it will define later) \cite{johnson1983knapsacks,kellerer2004knapsack,HajiaghayiJLMRV06}.
{\sc Maximal} SSG and {\sc Maximal} SSGW generalize the {\sc Lazy bureaucrat problem} with common deadlines and arrivals \cite{ArkinBMS03,EsfahbodGS03,GZ08,GMP13} representing the {\em maximal} version of SS.
We aim at extending this problem with (weak) digraph constraints on digraphs.
\smallskip

%The complexity framework should be explored according to the class of the digraph.
Our main results are: the four problems are {\bf NP}-hard, even for simple classes of digraphs. This is true even in in-rooted and out-rooted trees. SSG is also strongly {\bf NP}-hard in 3-regular digraphs and this result is tight according to degree parameters. {\sc Maximal} SSG is strongly {\bf NP}-hard in directed acyclic graphs (DAG) with a unique sink  and 3 weights. There is also a reduction preserving approximation for the four problems in DAG with maximum in-degree 2. However, some classes of graphs make the problems solvable in polynomial time or approximable within any given error.
%Interestingly, the class of DAG has many fruitful properties.
The class of oriented trees admits non-trivial dynamic programming algorithms. In
tournament graphs, SSG and {\sc Maximal} SSG are polynomial. We also provide approximation schemes for SSG and {\sc Maximal} SSG
in DAG.
\smallskip

The present paper is organized as follows. Section  \ref{def} contains some definitions on graphs that we use throughout the paper and a formal definition of our problems. Section \ref{rel} makes an overview of related works. Then, we present some complexity results for the four problems according to the topology of the digraph: regular digraphs are studied in Section \ref{gen}, DAG in Section \ref{gen_prop} and oriented trees in Section \ref{tree}. Dynamic programming algorithms are provided for oriented trees in Section \ref{tree}. In Section \ref{sec:approx}, we propose approximation schemes for SSG and {\sc Maximal} SSG in DAG. Some perspectives are given in Section \ref{discuss}.

\section{Definitions and concepts}\label{def}

\subsection{Graph terminology}\label{term}
A directed graph (or {\em digraph}) is a graph whose edges have a direction. Formally, a digraph $G$ is a pair $(V,A)$ where $V$ and $A$ are the {\em vertex set} and the {\em arc set}, respectively.
Given two vertices $x$ and $y$, the notation $(x,y)$ means the arc that goes from $x$ to $y$ and $[x,y]$ is an edge (a non-oriented arc).
\smallskip

The {\em in-neighborhood} (and the {\em in-degree}) of a vertex $v$ in $G$ denoted by  $N^-_G(v)$ and $\deg^-_G(v)$ respectively are defined by $N^-_G(v)=\{u\in V:(u,v)\in A\}$ and $\deg^-_G(v)=|N^-_G(v)|$. Similarly, the {\em out-neighborhood} and the {\em out-degree}, $N^+_G(v)$ and $\deg^+_G(v)$ are
defined by $N^+_G(v)=\{u\in V:(v,u)\in A\}$ and $\deg^+_G(v)=|N^+_G(v)|$. A vertex with $\deg^-_G(v)=0$  is called a {\em source} and similarly, a vertex with $\deg^+_G(v)=0$ is called a {\em sink}. The neighborhoods of a vertex $v$ is defined by the set $N_G(v)=N^-_G(v) \cup N^+_G(v)$ and its degree is $\deg_G(v)=\deg^-_G(v)+\deg^+_G(v)$.
A graph is  $k$-regular if the degree of each node is $k$.
\smallskip

A {\em directed path} $\mu_G(x,y)$ from $x$ to $y$ in $G$ is a succession of vertices $(v_1,\dots,v_k)$ where $v_1=x$, $v_k=y$ and $(v_i,v_{i+1})\in A$ for every $i=1,\dots,k-1$. A {\em circuit} $C$ is a path of positive length from $x$ to $x$.
\smallskip

Given a subset of vertices $S\subseteq V$, we denote by $G-S$ the subgraph induced by $V\setminus S$.
\smallskip

In this paper, we also consider some special classes of digraphs: an {\em acyclic digraph} (or DAG for Directed Acyclic Graph) is a digraph without circuit. It is well known that a DAG  has a source and a sink. An {\em oriented tree} is a digraph formed by orienting the edges of an undirected tree and an {\em out-rooted tree} ({\em in-rooted tree} resp.) is an oriented tree where the out-degree (in-degree resp.) of each vertex is equal to 1. A root (anti-root resp.) is a vertex without any in-neighborhood (out-neighborhood resp.).
Finally, a {\em tournament} is an oriented graph where the underlying graph is a complete graph, or equivalently there is exactly one arc between  any two distinct vertices.
\smallskip

In this document, we only consider simple digraphs, i.e. with no loop and no multiple arc. %\\
%\smallskip

\subsection{Subset Sum problems}\label{pbs}

%The  problems studied here can be described as follows:\\

\subsubsection{\textsc{Subset Sum with digraph constraints}} \

The first problem is called  {\sc subset sum with digraph constraints} (\textsc{SSG} in short) and its input is
a digraph $G=(V,A)$, a non-negative weight $w(i)$ for every node $i \in V$, and a positive bound $B$. The weight of $S \subseteq V$ is denoted by $w(S)$ and defined as $\sum_{i \in S} w(i)$. A feasible solution $S$ is a subset of $V$ satisfying the following constraints.
\begin{eqnarray}
&& \label{dc} \forall x \in S, \, (x,y) \in A \Rightarrow y \in S\\
&&\label{bd} w(S)\le B
%&&\label{busy} \forall y \in V \setminus S, \mbox{ either }S \cup \{y\} \mbox{ does not satisfy (\ref{dc}) or } w(S \cup \{y\})> B
\end{eqnarray}
Constraints (\ref{dc}) are called the {\em digraph constraints} while  (\ref{bd}) corresponds to a {\em budget constraint}. Formally, the problem is defined by:

\begin{center}
\begin{tabular}{|l|}
\hline
\textsc{SSG}\\
\hline
\textsl{Input}: a node weighted digraph $G=(V,A,w)$ and a bound $B$. \\
\textsl{Output}:  $S\subseteq V$ satisfying  (\ref{dc}) and (\ref{bd}).\\
\textsl{Objective}: maximize $w(S)$.\\
\hline
\end{tabular}
\end{center}

Obviously, this optimization problem is related to the exact decision version where we try to decide if there is a  subset
$S$ satisfying (\ref{dc}) with $w(S)= B$, which is a natural generalization of the standard {\sc Subset Sum} decision problem
(see Problem [SP13], page 223 in \cite{GJ79}) by considering the digraph without arcs, i.e., $A=\emptyset$.
\smallskip

Let us introduce an intermediary decision problem, called \textsc{Cardinality SSG} in the rest of the paper.
The input consists of a digraph $G=(V,A)$, a bound $B$, an integer $p \le |V|$ and a weight function $w$ on the nodes satisfying $1 \le w(v) \le B-p$. The problem is to decide if there exists $J \subseteq V$ such that $w(J)=B$, $|J|=p$ and $J$ satisfies the digraph constraints \eqref{dc}.
%\textsc{Cardinality SSG} consists, given a bound $B$, a graph $G=(V,A)$ on $n$ nodes node-weighted by $w$ and an integer $p\leq n$ such that $1\leq w(v)\leq B-p$ for every $v\in V$, of deciding if there exists a subset $J\subset V$ of $p$ nodes of $G$ such that $w(J)=B$ \su{and satisfying the digraph constraints \eqref{dc}}.

\subsubsection{\textsc{Maximal Subset Sum with digraph constraints}} \

This new problem is called  {\sc Maximal subset sum with digraph constraints} (\textsc{Maximal SSG} in short) and its input is the same as for
\textsc{SSG}. A feasible solution $S$ is a subset of $V$ satisfying $(\ref{dc})$, $(\ref{bd})$ and the following third constraint:
\begin{equation}
\label{busy1} \text{ There is no } S'\supset S \text{ such that } S' \text{satisfies \eqref{dc} and \eqref{bd}}.
\end{equation}

This last condition is called the {\em maximality constraint} and it corresponds to the notion of maximal subset satisfying the digraph constraint. As opposed to \textsc{SSG},
the goal of \textsc{Maximal SSG} is to {\em minimize} $w(S)$. Formally:

\begin{center}
\begin{tabular}{|l|}
\hline
\textsc{Maximal SSG}\\
\hline
\textsl{Input}: a node weighted digraph $G=(V,A,w)$ and a bound $B$. \\
\textsl{Output}:  $S\subseteq V$ satisfying  (\ref{dc}), (\ref{bd}) and (\ref{busy1}).\\
\textsl{Objective}: minimize $w(S)$.\\
\hline
\end{tabular}
\end{center}

\bigskip

\noindent We also strengthen the digraph constraints by a new kind of constraints called {\em weak digraph constraints} and defined by:
\begin{eqnarray}
&& \label{sdc}  N^-_G(x)\subseteq S \wedge N^-_G(x)\neq \emptyset \Rightarrow x\in S
\end{eqnarray}

By replacing (\ref{dc}) by (\ref{sdc}) in the definition of {\sc SSG} and {\sc Maximal SSG}, we obtain two additional optimization problems.

%Based on (\ref{sdc}) instead of (\ref{dc})  two optimization problems can be studied:

\subsubsection{\textsc{Subset Sum with weak digraph constraints}} \

This problem is called  {\sc subset sum with weak digraph constraints} (\textsc{SSGW} in short).

\begin{center}
\begin{tabular}{|l|}
\hline
\textsc{SSGW}\\
\hline
\textsl{Input}: a node weighted digraph $G=(V,A,w)$ and a bound $B$. \\
\textsl{Output}:  $S\subseteq V$ satisfying  (\ref{sdc}) and (\ref{bd}).\\
\textsl{Objective}: maximize $w(S)$.\\
\hline
\end{tabular}
\end{center}

\subsubsection{\textsc{Maximal Subset Sum with weak digraph constraints}} \

As previously, we can define a maximal subset satisfying the weak digraph constraint:

\begin{equation}
\label{sbusy1} \text{ There is no } S'\supset S \text{ such that } S' \text{satisfies \eqref{sdc} and \eqref{bd}}.
\end{equation}

This  condition is denoted by the {\em weak maximality constraint} and  the last  problem %studied
is called  {\sc Maximal subset sum with weak digraph constraints} (\textsc{Maximal SSGW} in short)
\begin{center}
\begin{tabular}{|l|}
\hline
\textsc{Maximal SSGW}\\
\hline
\textsl{Input}: a node weighted digraph $G=(V,A,w)$ and a bound $B$. \\
\textsl{Output}:  $S\subseteq V$ satisfying  (\ref{sdc}), (\ref{bd}) and (\ref{sbusy1}).\\
\textsl{Objective}: minimize $w(S)$.\\
\hline
\end{tabular}
\end{center}

The feasibility of a solution $S\subseteq V$ for \textsc{Maximal SSGW} can be decided in $O(n^2)$. Indeed, \eqref{sdc} and \eqref{bd} are checked in $O(n)$ and the maximality constraint \eqref{sbusy1} can be checked as follows: for every $v\in V\setminus S$, add $v$ and (inductively) the vertices that allow to satisfy  \eqref{sdc} in $S$ (by necessary condition) and check condition \eqref{bd} with the increased set because the weights are non-negative.

\section{Related works}\label{rel}

The \textsc{Subset Sum Problem} is one of the simplest and fundamental \textbf{NP}-hard problems. It appears in
many real world applications. Given $n$ integers $a_i$ for $i=1,\dots,n$ and a target $B$, the goal is to find a subset $S\subseteq \{1,\dots,n\}$ such that $\sum_{i\in S}a_i=B$. A survey of existing results on the \textsc{Subset Sum Problem} can be found in Chapter 4 of \cite{kellerer2004knapsack}. There are several generalizations of the  \textsc{Subset Sum Problem} studied in the literature, see for instance \cite{WoegingerY92,KothariSZ05,CieliebakEPS08,EggermontW13,bervoets2015}. In
\cite{WoegingerY92}, the variation, called \textsc{Equal Subset Sum from two sets} is shown to be \textbf{NP}-complete, where given a set of $n$ integers $a_i$ for $i=1,\dots,n$, the problem is to decide whether there exist two disjoint nonempty subsets of indices $S_1,S_2\subseteq \{1,\dots,n\}$ such that $\sum_{i\in S_1}a_i=\sum_{j\in S_2}a_j$. In \cite{KothariSZ05}, two generalizations to intervals are proposed and they are motivated by single-item multi-unit auctions; here, we are given a set of $n$ intervals $[a_i,b_i]$, a target $B$, and the goal is to choose a set of integers (at most one from each interval for the first problem and for the second problem, the additional restriction that  at least $k_1$ and at most $k_2$ integers must be selected), whose sum approximates $B$ as best as possible. Several results are proposed, including a \textbf{FPTAS}. In \cite{EggermontW13}, the problem of deciding whether all integer values between two given bounds $B^-$ and $B^+$ are attainable  is proved to be $\mathbf{\Pi^2_p}$-
complete.
\smallskip

Many variations of the \textsc{Subset Sum Problem} have also been studied in \cite{CieliebakEP03,CieliebakEPS08} and especially a version on undirected graphs called \textsc{ESS with Exclusions}. Given  a connected undirected exclusion graph $G=(V,E)$ where the nodes are weighted by $w(v)\geq 0$, the problem consists in deciding if there are two disjoint independent sets $X,Y\subseteq V$ of $G$ such that $w(X)=\sum_{x\in X}w(x)=\sum_{y\in Y}w(y)$. \textsc{ESS with Exclusions} is obviously \textbf{NP}-complete and a pseudo-polynomial time algorithm is presented in \cite{CieliebakEP03,CieliebakEPS08}.
\smallskip

The \textsc{Partially-Ordered Knapsack} problem (also known as the \textsc{Precedence - Constrained Knapsack Problem}) is a natural generalization of \textsc{SSG} (exactly as {\sc Knapsack} generalizes {\sc Subset Sum}).
Here, we are given a set $V$ of items, a DAG $G=(V,A)$ (or equivalently a poset $\prec_P$ on $V)$  and a bound $B$. Each item $v\in V$ has a size $p(v)\geq 0$ and an associated weight $w(v)\geq 0$. The objective is to find $S\subseteq V$, whose weight $w(S)=\sum_{v \in S}w(v)$ is maximized under the digraph constraints and also $p(S):=\sum_{v \in S} p(v)$ must be at most $B$.
%of items of maximum weight $w(S)$ in a knapsack of bounded size $p(S)\leq B$ in such a way that $S$ satisfies the digraph constraints.
When $p(v)=w(v)$, we clearly obtain \textsc{SSG}. \textsc{Partially-Ordered Knapsack} is strongly \textbf{NP}-hard, even when $p(v) = w(v)$, $\forall v\in V$, and $G$ is a bipartite DAG \cite{johnson1983knapsacks}. In 2006, it was demonstrated in \cite{HajiaghayiJLMRV06} that \textsc{Partially-Ordered Knapsack} is hard to approximate within a factor $2^{(\log n)^\varepsilon}$, for some $\varepsilon > 0$, unless \textsc{3SAT} $\in$ \textbf{DTIME}$(2^{n^{\frac{3}{4}+\delta}})$. A survey of some applications and results can also be found in the book (pages 402-408 of \cite{kellerer2004knapsack}).
\smallskip

In \cite{kolliopoulos2007},  some \textbf{FPTAS} are proposed for \textsc{Partially-Ordered Knapsack} in the case of $2$-dimensional partial ordering (a generalization of series-parallel digraphs) and in the DAG whose bipartite complement are chordal bipartite. Also, a polynomial-time algorithm for \textsc{Partially-Ordered Knapsack} on Red-Blue bipartite DAG is given when its bipartite complement is chordal bipartite.
\smallskip

In the case of rooted trees, a \textbf{FPTAS} is also proposed for \textsc{Precedence-Constrained Knapsack Problem} in
\cite{johnson1983knapsacks}. The special case of in-rooted trees is also known in the literature as the {\em Tree Knapsack Problem} \cite{BeckerP95,ChoS97}.
\smallskip

In \cite{BolandBFFS12} an approach based on clique inequalities is presented for determining facets of the polyhedron of the \textsc{Precedence-Constrained Knapsack Problem}.
\smallskip

Two related problems, known as the \textsc{Neighbour Knapsack Problem}, are studied in \cite{BorradaileHW12} in which dependencies between items are given by an undirected (or a directed) graph $G=(V,E)$. In the first version, an item can
be selected only if at least one of its neighbors is also selected. In the second version, an item can
be selected only when all its neighbors are also selected. The authors of \cite{BorradaileHW12} propose upper and lower bounds on the
approximation ratios for these two problems on undirected and directed graphs.
\smallskip

Concerning the {\em Maximal version} of {\sc Subset Sum}, this problem is called the \textsc{Lazy Bureaucrat Problem} with common arrivals and deadlines  in the literature \cite{GZ08,GMP13} where the problem has been proved \textbf{NP}-hard and approximable with a \textbf{FPTAS}. This latter problem has also several generalizations known as the \textsc{Lazy Bureaucrat scheduling problem} \cite{ArkinBMS03,EsfahbodGS03,GZ08} and the \textsc{Lazy Matroid Problem}  \cite{GourvesMP14}.

\section{Regular Digraphs}\label{gen}

%\subsection{\su{Complexity results in connected 3-regular digraphs SUPPRIMER TITRE}}

\begin{theorem}\label{Strong-NPC}
\textsc{SSG} is strongly \textbf{NP}-hard for connected digraphs in which each node has either out-degree 2 and in-degree 1 or the reverse.
\end{theorem}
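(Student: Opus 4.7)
The plan is a polynomial-time reduction from \textsc{SSG} on connected bipartite DAGs, which the paper recalls is strongly \textbf{NP}-hard (the case $p(v)=w(v)$ of \textsc{Partially-Ordered Knapsack}~\cite{johnson1983knapsacks}). Given such an input $(G_0=(V_0,A_0),w_0,B)$, I build an equivalent instance $(G,w,B)$ in the target class whose size is linear in $|V_0|$ and with weights and budget unchanged, so strong hardness transfers.

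I replace each $v\in V_0$ by a \emph{super-node gadget} $\Gamma_v$: a strongly connected sub-digraph of size $s_v=\max(a_v+b_v,\,3)$ (where $a_v=\deg^-_{G_0}(v)$ and $b_v=\deg^+_{G_0}(v)$), every node of which ends up with in/out-degrees $(1,2)$ or $(2,1)$. The backbone of $\Gamma_v$ is a directed Hamiltonian cycle through its $s_v$ nodes; this gives each node internal degrees $(1,1)$. Among the nodes of $\Gamma_v$ I distinguish $a_v$ \emph{in-ports} and $b_v$ \emph{out-ports}, the remaining nodes being \emph{pads}. Each original arc $(u,v)\in A_0$ is replaced by an external arc from a fresh out-port of $\Gamma_u$ to a fresh in-port of $\Gamma_v$, promoting that out-port to $(1,2)$ and that in-port to $(2,1)$. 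A small set of internal \emph{chord} arcs among the pads then brings every pad to total degree $3$ with the correct in/out split. Finally, the entire weight $w_0(v)$ is placed on a single node of $\Gamma_v$, the other nodes having weight $0$.

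Correctness follows because each $\Gamma_v$ contains a spanning directed cycle, so constraint~(\ref{dc}) forces $\Gamma_v\cap S\in\{\emptyset,\Gamma_v\}$ for every feasible $S$ of $G$; the external arcs then propagate~(\ref{dc}) between super-nodes exactly as the arcs of $G_0$ do between original vertices, so the map $S\mapsto\{v:\Gamma_v\subseteq S\}$ is a weight-preserving bijection between feasible solutions. Connectivity of $G$ follows from connectivity of $G_0$ together with strong connectivity of each $\Gamma_v$. The main obstacle is the chord design inside $\Gamma_v$: chords naturally come in complementary pairs (an arc $u\to u'$ raises $u$'s out-degree and $u'$'s in-degree by $1$), so I must choose $s_v$ with the right parity (adding at most one extra pad if needed) and avoid creating parallel arcs with the backbone. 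The small cases $a_v+b_v\le 2$ (leaves of the bipartite DAG) are handled by an explicit constant-size gadget, e.g.\ the $3$-cycle $u_1\to u_2\to u_3\to u_1$ with an additional chord $u_3\to u_2$, whose degree profile is checked by hand. This is a finite case analysis rather than a conceptual obstruction.
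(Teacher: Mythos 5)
Your reduction is sound, but it takes a genuinely different route from the paper. The paper gives a self-contained reduction from \textsc{Clique} on $\Delta$-regular connected graphs: each vertex becomes a circuit $C_i$ of $\Delta$ unit-weight nodes, each edge becomes a fixed $6$-node strongly connected gadget $H(e)$ of weight $6\Delta n$, and the budget $B=3\Delta nk(k-1)+\Delta k$ is engineered so that hitting $B$ exactly forces the selection of $\binom{k}{2}$ edge gadgets and $k$ circuits, i.e.\ a $k$-clique. You instead start from the known strong \textbf{NP}-hardness of \textsc{SSG} on bipartite DAGs (Johnson--Niemi, via \textsc{Partially-Ordered Knapsack} with $p=w$) and apply a local degree-fixing replacement: a spanning directed cycle per gadget forces all-or-nothing membership under constraint~(\ref{dc}), and ports/pads/chords repair the degrees. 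Both arguments exploit the same key mechanism (strong connectivity of a gadget collapses it to a single decision), and your weight-preserving bijection is correct. What your version buys is modularity and exact preservation of weights and budget; what it costs is (a) reliance on the cited bipartite-DAG hardness, where you should check that the source instances are \emph{connected} (true for the standard reduction from \textsc{Clique} on connected graphs, but you use connectivity of $G_0$ and should say so), and (b) the loss of the by-products the paper extracts from its construction: the instance built in Theorem~\ref{Strong-NPC} has only two distinct weights and solutions of fixed cardinality, which is exactly what Corollaries on \textsc{Cardinality SSG} need and what later feeds Theorem~\ref{LazyNP-complete}; your gadgets carry weight $0$ on most nodes and arbitrary $w_0(v)$ on one node, so those corollaries would need a separate argument. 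Finally, tighten the pad bookkeeping: with $s_v=\max(a_v+b_v,3)$ the case $a_v+b_v=2$ leaves a single (odd) pad, so the extra pad you mention is actually mandatory there, and your sample $3$-cycle gadget with chord $u_3\to u_2$ only covers degree-$1$ vertices ($u_1$ must be the unique port, else it keeps total degree $2$). These are finite checks, not conceptual gaps.
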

\begin{proof}
We prove the strong \textbf{NP}-hardness using a reduction from {\sc Clique}:
\smallskip

\begin{center}
\begin{tabular}{|l|}
\hline
\textsc{Clique}\\
\hline
\textsl{Input}: a connected simple graph $G=(V,E)$. \\
\textsl{Output}:  $V'\subseteq V$  such that every two vertices in $V'$ are joined by an edge in $E$.\\
\textsl{Objective}: maximize $|V'|$.\\
\hline
\end{tabular}
\end{center}

{\sc Clique} is known to be \textbf{NP}-hard, even in regular connected graphs of degree $\Delta\geq 3$ (Problem [GT19], page 194 in \cite{GJ79}).

Let $I=(G,k)$ be an instance of the decision version of {\sc Clique} where $G=(V,E)$ is a regular connected graph of degree $\Delta$ and $V=\{1,\dots,n\}$. We construct an instance $I'=(G'=(V',A'),w,B)$ of  \textsc{SSG} as follows:\\

$G'=(V',A')$ is a connected digraph defined by $V'=V_C\cup V_E$ where $V_C=\{v_{i,j}:i\in V, j\in N_G(i)\}$ and $V_E=\{v_e^i:e\in E,~i=1,\dots,6\}=\cup_{e\in E} H(e)$ where $H(e)=\{v_e^i,i=1,...,6\}$ is a gadget. We start from $G$, and we replace each node $i\in V$ by a circuit $C_i$ of $\Delta$ vertices $v_{i}^j$ for $j\in N_G(i)$. Then two circuits $C_i$ and $C_j$ are connected via a gadget $H(e)$ if edge $e=[i,j]\in E$, %in $G$
in such a way that each node of the circuit has in-degree 2 and out-degree 1. Formally, if $e=[x,y]\in E$, then the gadget $H(e)$ has 6 nodes $\{v_e^i:~i=1,\dots,6\}$ where $v_e^1=v_e^x$ and $v_e^6=v_e^y$. An illustration is given in  Figure \ref{GadHfig1}.

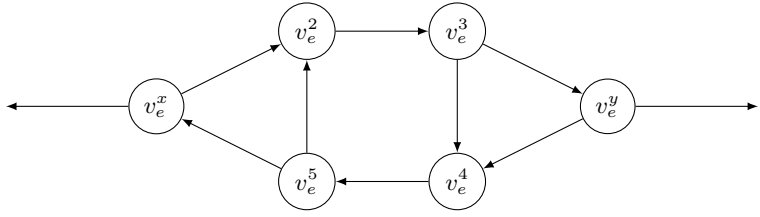
\begin{figure}[h!]
\centering

\begin{tikzpicture}
 \begin{scope}[every node/.style={draw,circle,black}]
\node(1) at (0,2) {$v_e^2$};
\node(2) at (2,2) {$v_e^3$} ;
\node(3) at (2,0) {$v_e^4$} ;
\node(4) at (0,0) {$v_e^5$} ;
\node(5) at (4,1) {$v_e^y$} ;
\node(6) at (-2,1) {$v_e^x$} ;
\end{scope}

\draw[-latex] (1) to (2) ;
\draw[-latex] (6) to (1) ;
\draw[-latex] (4) to (6) ;
\draw[-latex] (2) to (3) ;
\draw[-latex] (3) to (4) ;
\draw[-latex] (5) to (3) ;
\draw[-latex] (4) to (1) ;
\draw[-latex] (2) to (5) ;

\draw[-latex] (6) to (-4,1) ;
\draw[-latex] (5) to (6,1) ;

\end{tikzpicture}
 \caption{The gadget $H(e)$ for $e=[x,y]\in E$.}
 \label{GadHfig1}
\end{figure}

If $e=[i,j]\in E$, then we add the two arcs $(v_e^i,v_{i,j})$ and $(v_e^j,v_{j,i})$ in $G'$. Finally, each node of each circuit $C_i$ has weight $1$, i.e., $w(v_{i,j})=1$ while $w(v_e^i)=\Delta n$ for $i=1,\dots,6$. The bound $B=3\Delta nk(k-1)+\Delta k$.

An illustration of the construction is given in Figure \ref{fig2} for the graph described in  Figure \ref{fig22}. Clearly, this construction is done in polynomial time and  $G$ is a 3-regular connected digraph. Moreover, for each $v\in V'$ either $d^+_{G'}(v)=2$ and  $d^-_{G'}(v)=1$ , or $d^-_{G'}(v)=2$ and  $d^+_{G'}(v)=1$.

\smallskip

\begin{figure}[h!]
\centering

% \begin{tikzpicture}[>=stealth',shorten >=1pt,auto,node distance=2.8cm]
%   \node[state] (q1)      {$v_1$};
%   \node[state] (q2) [right of=q1]  {$v_2$};
%   \node[state] (q3) [below of=q2] {$v_3$};
%   \node[state] (q5) [below of=q1] {$v_5$};
%   \node[state] (q4) [below left of=q3, right of=q5] {$v_4$};
%
%
%   \path (q1)  edge               (q2);
%   \path (q2)  edge[bend left=20] (q3);
%   \path (q3)  edge[bend left=20] (q2);
%   \path (q2)  edge               (q4);
%   \path (q4)  edge               (q3);
% \end{tikzpicture}

\begin{tikzpicture}
 \begin{scope}[every node/.style={draw,circle,black}]
\node(1) at (0,0) {$1$};
\node(2) at (2,0) {$2$} ;
\node(3) at (4,0) {$3$} ;
\node(4) at (6,0) {$4$} ;
\node(5) at (6,-2) {$5$} ;
\node(6) at (4,-2) {$6$} ;
\node(7) at (2,-2) {$7$} ;
\node(8) at (0,-2) {$8$} ;
\end{scope}

\draw (1) to (2) ;
\draw (1) to [bend left] (4) ;
\draw (1) to (7) ;
\draw (1) to (8) ;
\draw (2) to (3) ;
\draw (2) to (7) ;
\draw (2) to (8) ;
\draw (3) to (4) ;
\draw (3) to (5) ;
\draw (3) to (6) ;
\draw (4) to (5) ;
\draw (4) to (6) ;
\draw (5) to (6) ;
\draw (5) to [bend left] (8) ;
\draw (6) to (7) ;
\draw (7) to (8) ;

\end{tikzpicture}
 \caption{Example of an instance $G$ of {\sc Clique} with $\Delta=4$.}
 \label{fig22}
\end{figure}
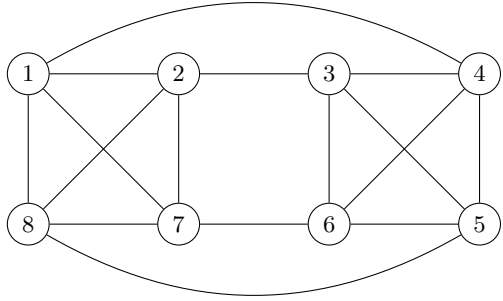

\begin{figure}[h!]
\centering

\begin{tikzpicture}[scale=1]

\begin{scope}[every node/.style={draw,circle,black,inner sep=0.8}]
\node(14) at (0,0) {$v_{1,4}$};
\node(12) at (1,0) {$v_{1,2}$};
\node(17) at (1,-1) {$v_{1,7}$};
\node(18) at (0,-1) {$v_{1,8}$};

\node(81) at (0,-3) {$v_{8,1}$};
\node(82) at (1,-3) {$v_{8,2}$};
\node(87) at (1,-4) {$v_{8,7}$};
\node(85) at (0,-4) {$v_{8,5}$};

\node(21) at (4,0) {$v_{2,1}$};
\node(23) at (5,0) {$v_{2,3}$};
\node(27) at (5,-1) {$v_{2,7}$};
\node(28) at (4,-1) {$v_{2,8}$};

\node(71) at (4,-3) {$v_{7,1}$};
\node(72) at (5,-3) {$v_{7,2}$};
\node(76) at (5,-4) {$v_{7,6}$};
\node(78) at (4,-4) {$v_{7,8}$};

\node(32) at (8,0) {$v_{3,2}$};
\node(34) at (9,0) {$v_{3,4}$};
\node(35) at (9,-1) {$v_{3,5}$};
\node(36) at (8,-1) {$v_{3,6}$};

\node(63) at (8,-3) {$v_{6,3}$};
\node(64) at (9,-3) {$v_{6,4}$};
\node(65) at (9,-4) {$v_{6,5}$};
\node(67) at (8,-4) {$v_{6,7}$};

\node(43) at (12,0) {$v_{4,3}$};
\node(41) at (13,0) {$v_{4,1}$};
\node(45) at (13,-1) {$v_{4,5}$};
\node(46) at (12,-1) {$v_{4,6}$};

\node(53) at (12,-3) {$v_{5,3}$};
\node(54) at (13,-3) {$v_{5,4}$};
\node(58) at (13,-4) {$v_{5,8}$};
\node(56) at (12,-4) {$v_{5,6}$};
\end{scope}

\node(h18) at (0,-2) {$H([1,8])$};
\node(h12) at (2.5,0) {$H([1,2])$};
\node(h27) at (5,-2) {$H([2,7])$};
\node(h78) at (2.5,-4) {$H([7,8])$};
\node(h17) at (3,-1.7) {$H([1,7])$};
\node(h28) at (2,-2.2) {$H([2,8])$};
\node(h23) at (6.5,0) {$H([2,3])$};
\node(h67) at (6.5,-4) {$H([6,7])$};
\node(h34) at (10.5,0) {$H([3,4])$};
\node(h56) at (10.5,-4) {$H([5,6])$};
\node(h36) at (8,-2) {$H([3,6])$};
\node(h45) at (13,-2) {$H([4,5])$};
\node(h35) at (11,-1.7) {$H([3,5])$};
\node(h46) at (10,-2.2) {$H([4,6])$};
\node(h14) at (6.5,1) {$H([1,4])$};
\node(h58) at (6.5,-5) {$H([5,8])$};

\draw[-latex] (14) to (12) ;
\draw[-latex] (12) to (17) ;
\draw[-latex] (17) to (18) ;
\draw[-latex] (18) to (14) ;

\draw[-latex] (81) to (82) ;
\draw[-latex] (82) to (87) ;
\draw[-latex] (87) to (85) ;
\draw[-latex] (85) to (81) ;

\draw[-latex] (21) to (23) ;
\draw[-latex] (23) to (27) ;
\draw[-latex] (27) to (28) ;
\draw[-latex] (28) to (21) ;

\draw[-latex] (71) to (72) ;
\draw[-latex] (72) to (76) ;
\draw[-latex] (76) to (78) ;
\draw[-latex] (78) to (71) ;

\draw[-latex] (32) to (34) ;
\draw[-latex] (34) to (35) ;
\draw[-latex] (35) to (36) ;
\draw[-latex] (36) to (32) ;

\draw[-latex] (43) to (41) ;
\draw[-latex] (41) to (45) ;
\draw[-latex] (45) to (46) ;
\draw[-latex] (46) to (43) ;

\draw[-latex] (53) to (54) ;
\draw[-latex] (54) to (58) ;
\draw[-latex] (58) to (56) ;
\draw[-latex] (56) to (53) ;

\draw[-latex] (63) to (64) ;
\draw[-latex] (64) to (65) ;
\draw[-latex] (65) to (67) ;
\draw[-latex] (67) to (63) ;

\draw[-latex] (h18) to (18) ;
\draw[-latex] (h18) to (81) ;
\draw[-latex] (h12) to (12) ;
\draw[-latex] (h12) to (21) ;
\draw[-latex] (h27) to (27) ;
\draw[-latex] (h27) to (72) ;
\draw[-latex] (h78) to (78) ;
\draw[-latex] (h78) to (87) ;
\draw[-latex] (h17) to [bend right] (17) ;
\draw[-latex] (h17) to [bend right] (71) ;
\draw[-latex] (h28) to [bend left] (28) ;
\draw[-latex] (h28) to [bend left] (82) ;
\draw[-latex] (h23) to (23) ;
\draw[-latex] (h23) to (32) ;
\draw[-latex] (h67) to (76) ;
\draw[-latex] (h67) to (67) ;
\draw[-latex] (h34) to (34) ;
\draw[-latex] (h34) to (43) ;
\draw[-latex] (h56) to (56) ;
\draw[-latex] (h56) to (65) ;
\draw[-latex] (h36) to (36) ;
\draw[-latex] (h36) to (63) ;
\draw[-latex] (h45) to (45) ;
\draw[-latex] (h45) to (54) ;
\draw[-latex] (h35) to [bend right] (35) ;
\draw[-latex] (h35) to [bend right] (53) ;
\draw[-latex] (h46) to [bend left] (46) ;
\draw[-latex] (h46) to [bend left] (64) ;
\draw[-latex] (h14) to [bend right] (14) ;
\draw[-latex] (h14) to [bend left] (41) ;
\draw[-latex] (h58) to [bend right] (58) ;
\draw[-latex] (h58) to [bend left] (85) ;

\end{tikzpicture}
 \caption{Example of digraph $G'$ constructed in the reduction from $G$.}
 \label{fig2}
\end{figure}
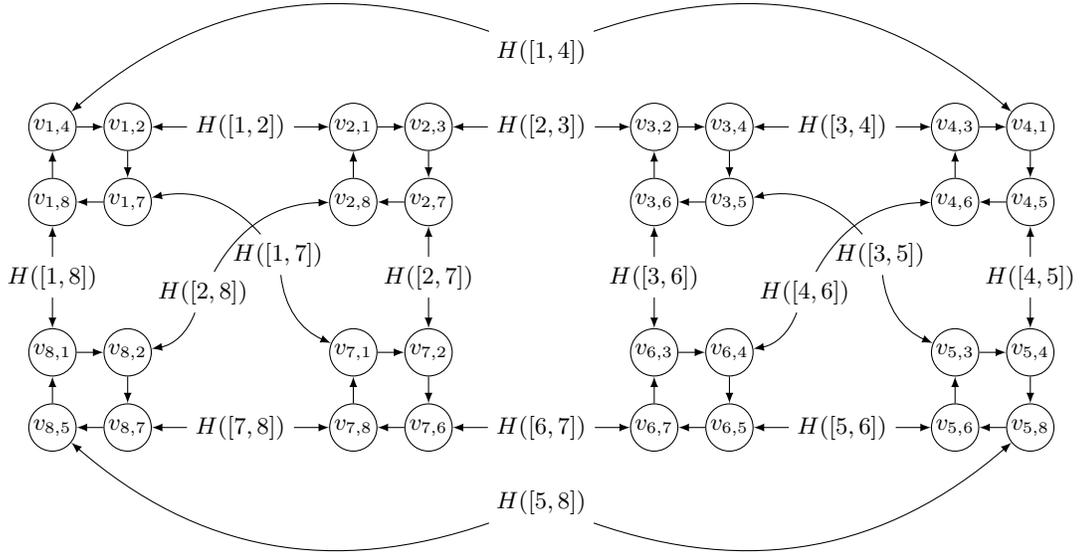
\smallskip

We claim that there is a clique $S\subseteq \{1,\dots,n\}$ of size $k$ if and only if there is $S'\subseteq V'$ satisfying the digraph constraints \eqref{dc} in $G'$ with $w(S')=B=3\Delta nk(k-1)+\Delta k$.\\

Assume there exists a clique $S$ of size $|S|=k$ in $G$. Then the subgraph induced by $S$ contains $\frac{k(k-1)}{2}$ edges.
The set $S'= \left( \cup_{i\in S}C_i \right) \cup \{v_e^\ell\in V_E:e=[i,j]\in E(S),~\ell=1,\dots,6\}$ satisfies the digraph constraints \eqref{dc} in $G'$ and $w(S')=6\Delta n\frac{k(k-1)}{2}+\Delta |S|=3\Delta nk(k-1)+\Delta k=B$.
\smallskip

Conversely, assume there exists $S'\subseteq V'$ satisfying the digraph constraints \eqref{dc} with $w(S')=B=3\Delta nk(k-1)+\Delta k$ for some integer $k\geq 2$. Then $S'$ contains $\frac{k(k-1)}{2}$ gadgets $H(e)$ (because each $H(e)$ is strongly connected) with total weight $6\Delta n$ and $\Delta k$ vertices from $V$ with weight $1$ since the weights of vertices of $V'\setminus V_E$ cannot compensate the weights of one vertex of $V_E$. Due to the digraph constraints, for every gadget $H(e)\subseteq S'\cap V_E$, the %whole
two circuits $C_i,C_j$ such that $e=[i,j]$ must entirely belong to $S'$.
We construct the set $S=\{i:C_i\subseteq S'\}$. Then $S$ contains exactly $k$ vertices. In addition, the subgraph $G_S$ of $G$ induced by $S$ contains $\frac{k(k-1)}{2}$ edges representing the $\frac{k(k-1)}{2}$ gadgets $H(e)$ in $S'\cap V_E$. We conclude that $G_S$ is a complete graph, so $S$ is a clique of size $k$ in $G$.\hfill\qed
\end{proof}

\begin{corollary}
 {\sc Cardinality SSG} is strongly \textbf{NP}-complete in general digraphs even if the weight of each node is either $a$ or $b$ with $1\leq a<b$  integers.
\end{corollary}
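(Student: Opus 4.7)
The approach is to reuse almost verbatim the reduction from the proof of Theorem~\ref{Strong-NPC}, reading off the appropriate cardinality parameter from its canonical solution. Membership in \textbf{NP} is immediate: given a candidate $J$, the three conditions $w(J) = B$, $|J| = p$ and the digraph constraint are checkable in polynomial time. For hardness, I would start from a \textsc{Clique} instance $(G,k)$ on a $\Delta$-regular connected graph with $\Delta \geq 3$, build the digraph $G' = (V', A')$ and weights exactly as in Theorem~\ref{Strong-NPC}, keep the same bound $B = 3\Delta n k(k-1) + \Delta k$, take the two weights to be $a = 1$ (for circuit vertices) and $b = \Delta n$ (for gadget vertices), and set $p = 3k(k-1) + \Delta k$. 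This value of $p$ is precisely $|S'|$ for the set $S'$ associated with a $k$-clique, which consists of $k$ circuits $C_i$ (contributing $k\Delta$ vertices of weight $1$) and $\binom{k}{2}$ gadgets $H(e)$ (contributing $3k(k-1)$ vertices of weight $\Delta n$).

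Next I would verify the side conditions of \textsc{Cardinality SSG}. The two weights are $1$ and $\Delta n$, so $1 \leq a < b$ holds as soon as $\Delta n \geq 2$, which is true because $\Delta \geq 3$. The requirement $w(v) \leq B - p$ reduces to $\Delta n \leq 3(\Delta n - 1)k(k-1)$, which is amply satisfied for $k \geq 3$; since \textsc{Clique} on $\Delta$-regular graphs remains \textbf{NP}-hard for $k \geq 3$, no generality is lost. The largest weight $\Delta n$ is polynomial in the input size, so a polynomial reduction from a strongly \textbf{NP}-hard problem preserves strong \textbf{NP}-hardness.

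The equivalence of instances then essentially follows from the analysis already done in Theorem~\ref{Strong-NPC}. The forward direction is identical and additionally yields $|S'| = p$ by construction. For the backward direction, the key point I would make explicit is that the weight equation arising in Theorem~\ref{Strong-NPC}, namely $6j\Delta n + m\Delta = B$ where $j$ is the number of entire gadgets and $m$ the number of entire circuits kept in $S'$, combined with the feasible ranges $0 \leq j \leq |E|$ and $0 \leq m \leq n$, already forces $m = k$ and $j = \binom{k}{2}$ (reducing modulo $n$ one gets $m \equiv k \pmod n$, and the bounds collapse this to $m = k$). This in turn pins down $|S'| = 3k(k-1) + \Delta k = p$ automatically, so adding the cardinality constraint does not exclude any witness of the weight equality, and the same extraction of the clique $S = \{i : C_i \subseteq S'\}$ of Theorem~\ref{Strong-NPC} goes through. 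The main obstacle I anticipate is making this cardinality-pinning explicit: in the original proof it is slightly hidden behind the phrase ``$S'$ contains $\binom{k}{2}$ gadgets and $\Delta k$ vertices of weight $1$'', and the cleanest way to justify reusing the reduction for \textsc{Cardinality SSG} is to spell out the small Diophantine argument above.
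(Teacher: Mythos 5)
Your proposal is correct and follows essentially the same route as the paper: reuse the reduction of Theorem~\ref{Strong-NPC} verbatim, observe that the weights are $1$ and $\Delta n$, and set $p=\Delta k+3k(k-1)$, the cardinality of the canonical solution associated with a $k$-clique. The extra details you supply (the check that $w(v)\le B-p$ and the modular argument pinning down the number of circuits and gadgets, hence $|S'|=p$) are sound and merely make explicit what the paper's two-line proof leaves implicit.
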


\begin{proof} Using the reduction proposed in the proof of  Theorem \ref{Strong-NPC}, we can see that the weight of each node of $G'$ is either 1 or $\Delta n$ and the size of $S$ is exactly $p=\Delta k +6\frac{k(k-1)}{2}$ where $k$ is the size of the clique. \hfill\qed
\end{proof}

We now prove that Theorem \ref{Strong-NPC} is the best possible complexity result according to degree parameters, that is
either \textsc{SSG} is \textbf{NP}-hard in connected digraphs of maximum degree $2$ or  maximum out-degree $1$ or in-degree $1$,
but admits a pseudo polynomial algorithm for these digraphs or \textsc{SSG} is polynomial for connected digraphs with in-degree and out-degree $2$ for each node.

\begin{lemma}\label{euler}
\textsc{SSG} is polynomial-time solvable  in connected digraphs for which  the  in-degree and the out-degree of each node is $2$.
\end{lemma}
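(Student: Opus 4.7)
The plan is to observe that the in-degree/out-degree condition forces the digraph to be strongly connected, and that strong connectivity collapses the set of feasible solutions to just $\emptyset$ and $V$, so the problem is trivial.

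First I would recall the following standard fact about Eulerian digraphs: a weakly connected digraph in which every vertex satisfies $\deg^-(v)=\deg^+(v)$ admits an Eulerian circuit, and is therefore strongly connected. In our setting we have $\deg^-_G(v)=\deg^+_G(v)=2$ for every $v$, so $G$ is strongly connected. (If the paper's notion of ``connected'' is weak connectivity, as is standard, this is exactly what we need.)

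Next I would analyze the consequences of the digraph constraint \eqref{dc}. A set $S$ satisfying \eqref{dc} is closed under taking out-neighbors: if $x\in S$ then $N^+_G(x)\subseteq S$, hence by induction every vertex reachable from $x$ by a directed path belongs to $S$. In a strongly connected digraph, every vertex is reachable from every other vertex, so for any non-empty $S$ satisfying \eqref{dc} one has $S=V$. Therefore the only feasible solutions of \textsc{SSG} on such an instance are $\emptyset$ and $V$.

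Finally I would conclude with the algorithm: compute $w(V)$; if $w(V)\leq B$, return $S=V$ of weight $w(V)$, otherwise return $S=\emptyset$ of weight $0$. Both the connectivity check and the sum $w(V)$ are computed in $O(|V|+|A|)$ time, which is linear. There is no genuine obstacle here; the only point that might require a brief justification is the Eulerian/strong-connectivity argument, but it is a classical result that I would simply invoke.
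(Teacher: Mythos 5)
Your proposal is correct and follows essentially the same route as the paper: both invoke the Eulerian property of a connected digraph with balanced degrees to conclude that the only sets closed under the digraph constraint are $\emptyset$ and $V$, and then compare $w(V)$ with $B$. Your intermediate step through strong connectivity is just a slightly more explicit phrasing of the same argument.
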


\begin{proof}
Let $I=(G,w,B)$ be an instance of \textsc{SSG} such that $G=(V,A)$ is a connected digraph where the in-degree and the out-degree of each node is exactly 2.
\smallskip

Using Euler-Hierholzer Theorem \cite{BLW76}, we know that $G$ is Eulerian, i.e. there is a circuit visiting each arc of $A$ exactly once. Hence, by the digraph constraints \eqref{dc}, a feasible solution is either the empty set or the whole set of vertices $V$. Since $w$ is non-negative, it follows that $V$ is an optimal solution if and only if $w(V)\leq B$.\hfill\qed

\end{proof}

For the other cases, %correspond to the cases where
the digraph is a chain or an oriented tree (see Propositions \ref{Prop-NPC} and \ref{Theo:DynProg_Tree}, page \pageref{Prop-NPC}, and Remark \ref{Rem-tree}, page \pageref{Rem-tree}) and then belongs to the class of {\em Directed Acyclic Graphs} (DAG in short).

\section{Directed Acyclic Graphs}\label{gen_prop}

%\subsection{General Properties}

Let us start by some definitions and notions useful in the rest of this section. Given a connected DAG $G=(V,A)$, we say that $v' \in V$ is a {\em descendant} of $v$ iff $v=v'$ or there is a directed path from $v$ to $v'$ in $G$. Let $desc_G(v)$ denote the set of descendants of $v$ in $G$.
An {\em ascendant} of $v$ is a node $u\neq v$ such that $v\in desc_G(u)$. The set of ascendents of $v$ in $G$ is denoted by $asc_G(v)$.
%The set of descendants of $v$ in $G$ is denoted by $desc_G(v)$ and are all the nodes in $G$ that are reachable from $v$ (including $v)$, while the set of ascendants of $v$ is denoted by $asc_G(v)$ and corresponds to the nodes in $G$ that are reachable to $v$ (excluding $v)$.
Obviously, $desc_G(v) \cap asc_G(v)=\emptyset$ because $G$ is a DAG. By extension, given $S\subseteq V$, $desc_G(S)=\cup_{v\in S}desc_G(v)$ and $asc_G(S)=\cup_{v\in S}asc_G(v)$. Clearly, a solution $S$ satisfies the digraph constraints \eqref{dc} iff $S=desc_G(S)$.
\smallskip

The {\em kernel} of a subset $S\subseteq V$, denoted by $\kappa(S)$,  is a subset of minimal size such that:
\begin{enumerate}
\item $\kappa(S) \subseteq S$
\item $desc_G(\kappa(S))=desc_G(S)$.
\end{enumerate}

Because $G$ is a DAG, the notion of kernel is well defined and unique (actually, $\kappa(S)$ is the set of sources of the subgraph $G[S]$ induced by $S)$. Note that $\kappa(S)$ is an independent set of $G$. The notion of  $\kappa(S)$ is important because it constitutes in some sense the core of the digraph constraints. The following properties can be easily checked:

\begin{property} \label{property1DAG}
Let $S$ be a subset of a DAG $G=(V,A)$ satisfying  the digraph constraints \eqref{dc}. Let $v\in V$:
\begin{itemize}
\item $v\in S$ iff $desc_G(v)\subseteq S$
\item $v\notin S$ implies $asc_G(v)\cap S=\emptyset$
\end{itemize}
\end{property}

\noindent Now, we show that  \textsc{SSG} and  \textsc{Maximal SSG} in general digraphs can be restricted to DAG.

\begin{lemma}\label{lem}
The resolution (or approximation) of \textsc{SSG} (\textsc{Maximal SSG} resp.) in general digraphs and connected DAG are equivalent.
\end{lemma}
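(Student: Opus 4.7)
The plan is to reduce any instance of \textsc{SSG} (respectively \textsc{Maximal SSG}) on a general digraph to an equivalent instance on a connected DAG while preserving weights exactly. The reverse direction is immediate because connected DAGs are a subclass of general digraphs, so only the forward reduction needs to be carried out, and I would do it in two independent steps: condensation of strongly connected components, and then addition of a dummy sink to restore weak connectedness.

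First I would contract each strongly connected component (SCC). The key observation is that if $S \subseteq V$ satisfies the digraph constraint~\eqref{dc} and $C$ is an SCC of $G$, then $S \cap C \in \{\emptyset, C\}$: as soon as some $v \in S$ lies in $C$, every other $u \in C$ is reachable from $v$ along arcs, so iterated applications of~\eqref{dc} force $u \in S$. Hence the feasible solutions of $G$ correspond one-to-one with the downward-closed subsets of the condensation $\tilde G$, the DAG obtained by collapsing each SCC $C$ to a single vertex $\tilde v_C$ of weight $\sum_{v \in C} w(v)$. The map $S \mapsto \tilde S = \{\tilde v_C : C \subseteq S\}$ preserves the weight and the budget, and proper extensions of $\tilde S$ in $\tilde G$ correspond exactly to proper extensions of $S$ in $G$ that are unions of SCCs (the only candidates still satisfying~\eqref{dc}), so maximality transfers as well.

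Second, if $\tilde G$ is not weakly connected, I would add a single dummy vertex $v^*$ with $w(v^*) = 0$ together with an arc $(t_i, v^*)$ from an arbitrary sink $t_i$ of each weakly connected component of $\tilde G$; the resulting digraph $\tilde G^*$ is a connected DAG since $v^*$ is a common sink with no outgoing arcs. Because $v^*$ has weight zero and no outgoing neighbour, it can be added to any feasible subset of $\tilde G^*$ without affecting the weight or~\eqref{dc}. For \textsc{SSG} this immediately implies that the optimum is the same as in $\tilde G$. For \textsc{Maximal SSG}, every maximal solution of $\tilde G^*$ must contain $v^*$ (otherwise $S^{*} \cup \{v^*\}$ would be a strictly larger feasible extension), so the map $\tilde S \mapsto \tilde S \cup \{v^*\}$ is a weight-preserving bijection between the maximal solutions of $\tilde G$ and those of $\tilde G^*$.

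The main delicate point is verifying that the maximality constraint is preserved by both transformations, the feasibility correspondence being essentially immediate from the SCC observation. For the second step I need that any strict feasible extension of $\tilde S$ in $\tilde G$ lifts to a strict feasible extension of $\tilde S \cup \{v^*\}$ in $\tilde G^*$ with the same weight, and conversely: the forward direction is clear, and the converse follows because any extension $S' \supsetneq \tilde S \cup \{v^*\}$ in $\tilde G^*$ must add only vertices of $\tilde G$, so $S' \setminus \{v^*\} \supsetneq \tilde S$ is feasible in $\tilde G$ with the same weight, contradicting the maximality of $\tilde S$ whenever $w(S') \le B$. Since every transformation preserves the weight exactly, the reduction also preserves approximation ratios exactly, yielding the stated equivalence for both \textsc{SSG} and \textsc{Maximal SSG}.
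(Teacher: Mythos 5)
Your proof is correct and its core is the same as the paper's: both reduce the general case to a DAG by contracting each strongly connected component to a single vertex carrying the component's total weight, and both rely on the observation that a set satisfying the digraph constraint meets each SCC in either nothing or everything, so feasible (and maximal feasible) solutions correspond bijectively, with weights preserved. The paper states this in three sentences and leaves the verification to the reader; you spell out the bijection and the transfer of maximality, which is welcome. The genuine difference is your second step: the condensation of a disconnected digraph need not be connected, yet the lemma claims equivalence with \emph{connected} DAGs, and the paper's proof silently ignores this. Your zero-weight common sink $v^*$, fed by one sink of each weak component, cleanly restores connectivity without creating circuits; since $v^*$ has no out-neighbours and zero weight, it joins every solution for free under \eqref{dc}, and every maximal solution must contain it, so the weight-preserving bijection survives. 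This is a small but real patch to the paper's argument rather than a departure from it.
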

\begin{proof}

Take a digraph $G$, instance of {\sc SSG} ({\sc Maximal SSG} resp.) and replace each strongly connected component by a representative vertex whose weight is the sum of the weights of the nodes that it represents. It is known that  the resulting graph $G'$, called the {\em condensation} of $G$, is a DAG. It is not difficult to see that there is a bijection between the feasible solutions in $G$ and the feasible solutions in $G'$. Moreover, the weight of the solution is preserved.
\hfill\qed
\end{proof}

\begin{corollary}\label{CardinalitySSG-DAG-NPC}
{\sc Cardinality SSG} is strongly \textbf{NP}-complete in DAG even if the weight of each node is either $a$ or $b$ with $1\leq a<b$  integers.
\end{corollary}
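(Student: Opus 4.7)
The plan is to revisit the reduction of Theorem~\ref{Strong-NPC} and compose it with the condensation construction from the proof of Lemma~\ref{lem}. The strongly connected components of the digraph $G'$ built there are exactly the circuits $C_i$ (each with $\Delta$ unit-weight nodes) and the gadgets $H(e)$ (each with $6$ nodes of weight $\Delta n$); all remaining arcs of $G'$ go between distinct such components. Taking the condensation $G''$ therefore yields a DAG on $|V|+|E|$ vertices whose vertex weights take exactly two values: $a=\Delta$ for each condensed circuit and $b=6\Delta n$ for each condensed gadget, with $1\le a<b$.

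I would keep the same budget $B=3\Delta n k(k-1)+\Delta k$ and set the cardinality target to $p=k+\binom{k}{2}$. By Lemma~\ref{lem}, subsets of $V'$ satisfying~\eqref{dc} biject weight-preservingly with subsets of $V''$ satisfying~\eqref{dc}, so the proof of Theorem~\ref{Strong-NPC} already handles the weight and digraph-constraint side. The new point to verify is that the extra cardinality condition $|J|=p$ forces the correct structure: if $x$ condensed circuits and $y$ condensed gadgets are picked, then $x+y=p$ and $\Delta x+6\Delta n\,y=B$, which has the unique nonnegative integer solution $x=k$, $y=\binom{k}{2}$. The digraph constraints in $G''$ (a condensed gadget for edge $[i,j]$ forces its two endpoint-circuits, since in $G'$ each $H(e)$ has outgoing arcs into $C_i$ and $C_j$) then imply that the $\binom{k}{2}$ chosen gadgets attach only to vertices among the $k$ chosen circuit-representatives, which forces the $k$ original vertices of $G$ to form a clique.

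Before invoking strong \textbf{NP}-completeness I would also check the side constraints of Cardinality SSG: $p\le |V''|$ is immediate, and $b\le B-p$ is easy once $k\ge 2$ and $\Delta\ge 3$. Membership in \textbf{NP} is routine, and all numerical parameters are polynomially bounded in $n$, so the hardness is in the strong sense. The only careful point, and thus the place one should not wave hands, is the uniqueness argument for the linear system above: this is what rules out any "degenerate" feasible $J$ that would meet the weight and cardinality targets without decomposing as $k$ circuits $+\binom{k}{2}$ gadgets, and hence lets one genuinely recover a clique from any feasible solution to the Cardinality SSG instance on $G''$.
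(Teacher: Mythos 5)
Your proposal is correct and follows essentially the same route as the paper: compose the reduction of Theorem~\ref{Strong-NPC} with the condensation of Lemma~\ref{lem}, so that each circuit $C_i$ becomes a node of weight $\Delta$ and each gadget $H(e)$ a node of weight $6\Delta n$, with $p=k+\binom{k}{2}$ (the paper merely rescales these weights to $1$ and $6n$). Your explicit uniqueness argument for the system $x+y=p$, $\Delta x+6\Delta n\,y=B$ is a clean way to justify the counting step that the paper leaves implicit, but it is a refinement of the same proof, not a different one.
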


\begin{proof} Using both reductions proposed in the proofs of  Theorem \ref{Strong-NPC} and Lemma \ref{lem}, we can see that the weight of each node of $G'$ is either 1 or $6n$ ($\Delta$ and $6\Delta n$ simplified to 1 and $6n$) and the size of $S$ is exactly $p= k +\frac{k(k-1)}{2}$ where $k$ is the size of the clique. \hfill\qed
\end{proof}

\begin{remark}\
\begin{enumerate}
 \item {\sc Cardinality SSG} is polynomial if all the nodes have the same weight $a$.
 \item {\sc Cardinality SSG} is strongly \textbf{NP}-complete if the weight of a node is either $0$ or $a>0$.
 \item {\sc SSG} is polynomial-time solvable when there is a unique weight $a$ or two weights $0$ and $a>0$.
\end{enumerate}

% However, it  Concerning {\sc SSG}, these two  cases (a unique weight $a$ or two weight $0$ and $a>0$) are polynomial-time solvable.
%Obviously, the same result holds for {\sc SSG}.
% Corollary \ref{CardinalitySSG-DAG-NPC} is tight according to the number of distinct weights. In the sense that for DAG with a unique weight $a\geq 0$,  {\sc Cardinality SSG} becomes polynomial. However the problem remains strongly \textbf{NP}-complete in the case of two weights with $a>0$ and $b=0$. Obviously, the same result holds for  {\sc SSG}.
\end{remark}

\begin{proof}\
\begin{enumerate}
 \item In the case of a unique weight $a\geq 0$, if $|V|*a < B$ or $B \neq p*a$, then the answer is no. Otherwise, start from $S=\emptyset$ and iteratively add to $S$ a sink of $G[V \setminus S]$.

\item If the weight of each node is either $0$ or $a>0$, we use the same reduction as in the proof of Corollary \ref{CardinalitySSG-DAG-NPC} and we replace the weights $\Delta$ and $6\Delta n$ by
1 and 0 respectively.

\item In the case of {\sc SSG}, start from $S=\emptyset$ and iteratively add to $S$ a sink of $G[V \setminus S]$.
\hfill\qed
\end{enumerate}
\end{proof}

\begin{lemma}\label{lem_local}
Let $I=(G,w,B)$ be an instance of {\sc Maximal SSG} ({\sc Maximal SSGW}, resp.) such that  $G$ is a DAG.
 $S\subseteq V$ is a feasible solution to {\sc Maximal SSG} if and only if $S$ satisfies \eqref{dc}, \eqref{bd}
and \eqref{busy}.
\begin{equation}
\label{busy} \forall x \in V \setminus S, \quad S \cup \{x\} \mbox{ violates (\ref{dc}) or \eqref{bd}}
\end{equation}
Similarly,  $S\subseteq V$ is a feasible solution to {\sc Maximal SSGW} if and only if $S$ satisfies \eqref{sdc}, \eqref{bd}
and \eqref{sbusy}.
\begin{equation}
\label{sbusy} \forall x \in V \setminus S, \quad S \cup \{x\} \mbox{ violates (\ref{sdc}) or \eqref{bd}}
\end{equation}
\end{lemma}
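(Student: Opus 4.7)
The forward direction is immediate for both parts: if $S$ satisfies \eqref{busy1} (resp. \eqref{sbusy1}), then for any $x\in V\setminus S$ the strict superset $S\cup\{x\}$ cannot satisfy the two relevant constraints together, which is precisely \eqref{busy} (resp. \eqref{sbusy}). The converse is where the DAG hypothesis is used and where the work lies.

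For the converse of the \textsc{Maximal SSG} part, suppose $S$ satisfies \eqref{dc}, \eqref{bd} and \eqref{busy}, and assume for contradiction that some $S'\supsetneq S$ satisfies \eqref{dc} and \eqref{bd}. I will produce $x\in S'\setminus S$ such that $S\cup\{x\}$ is feasible, which will contradict \eqref{busy}. Since $G$ is a DAG, the induced subgraph $G[S'\setminus S]$ is a nonempty DAG and therefore admits a sink $x$, i.e.\ $N^+_G(x)\cap(S'\setminus S)=\emptyset$. Combining this with $N^+_G(x)\subseteq S'$, which is \eqref{dc} applied at $x$ inside $S'$, yields $N^+_G(x)\subseteq S$, so $S\cup\{x\}$ still satisfies \eqref{dc}. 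Nonnegativity gives $w(S\cup\{x\})\leq w(S')\leq B$, hence \eqref{bd}, and we have the required contradiction.

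For the converse of the \textsc{Maximal SSGW} part the same direct strategy fails, and this is the main obstacle: adding a single $x\in S'\setminus S$ to $S$ may trigger a new violation of \eqref{sdc} at some vertex $y$ whose only predecessor outside $S$ is $x$, so $S\cup\{x\}$ need not be feasible even when $S'$ is. The remedy will be an iterated peeling argument. Starting from $T:=S'$, I will repeatedly remove a suitable vertex of $T\setminus S$ while preserving \eqref{sdc}, until $|T\setminus S|=1$; the remaining $T=S\cup\{x^\ast\}$ then satisfies \eqref{sdc} and \eqref{bd}, contradicting \eqref{sbusy}. The key claim is that for every $T$ with $S\subsetneq T\subseteq S'$ satisfying \eqref{sdc} there exists $x\in T\setminus S$ with $T\setminus\{x\}$ still satisfying \eqref{sdc}. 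I will prove it by taking $x$ to be a source of the nonempty DAG $G[T\setminus S]$, so that $N^-_G(x)\cap(T\setminus S)=\emptyset$, whence $N^-_G(x)\cap T\subseteq S$. If $N^-_G(x)\subseteq T$ then $N^-_G(x)\subseteq S$, and since $x\notin S$ the constraint \eqref{sdc} applied to $S$ at $x$ forces $N^-_G(x)=\emptyset$. In either case $T\setminus\{x\}$ creates no violation at $x$ (using that $G$ is loopless, so $x\notin N^-_G(x)$), and \eqref{sdc} at any $y\notin T$ is inherited from $T$ because $T\setminus\{x\}\subseteq T$. Since weights are nonnegative, $w(T\setminus\{x\})\leq B$ is preserved, and iterating the claim at most $|S'\setminus S|-1$ times produces the desired singleton extension and the contradiction.
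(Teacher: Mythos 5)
Your proof is correct, and for \textsc{Maximal SSG} it is essentially the paper's argument: the nontrivial direction is handled by taking a sink $x$ of the nonempty DAG $G[S'\setminus S]$ and checking that $S\cup\{x\}$ still satisfies \eqref{dc} and \eqref{bd}. For \textsc{Maximal SSGW} the paper only writes that ``similar arguments can be used,'' and here you diverge: you claim the direct one-vertex strategy fails and substitute an iterated peeling of sources of $G[T\setminus S]$. Your peeling lemma is verified carefully and is valid (the check at the removed source, using \eqref{sdc} for $S$ and looplessness, and the inheritance of \eqref{sdc} at vertices outside $T$ are both sound). However, the premise that the direct strategy fails is too pessimistic: the very same sink choice works for \eqref{sdc}. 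Indeed, let $x$ be a sink of $G[S'\setminus S]$ and suppose some $y\notin S\cup\{x\}$ had $\emptyset\neq N^-_G(y)\subseteq S\cup\{x\}$. If $x\notin N^-_G(y)$, this contradicts \eqref{sdc} for $S$. If $x\in N^-_G(y)$, then $y\in N^+_G(x)$ and $N^-_G(y)\subseteq S'$, so \eqref{sdc} for $S'$ forces $y\in S'$; since $x$ is a sink of $G[S'\setminus S]$, $y\in N^+_G(x)\cap S'$ gives $y\in S$, a contradiction. Hence $S\cup\{x\}$ satisfies \eqref{sdc} and \eqref{bd} in one step, contradicting \eqref{sbusy}. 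What your induction buys is nothing extra here --- in fact the single vertex your peeling ultimately leaves behind is precisely a sink of $G[S'\setminus S]$, since repeatedly deleting sources produces a topological order whose last element has no out-neighbor in the set --- so the one-step argument subsumes it; but your version is a legitimate, self-contained alternative that does not require guessing the right vertex in advance.
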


\begin{proof} The proof is only detailed for {\sc Maximal SSG}. Similar arguments can be used for {\sc Maximal SSGW}.
Let $I=(G,w,B)$ be an instance of \textsc{Maximal SSG} such that $G=(V,A)$ is a DAG.

%\smallskip

By definition, a feasible solution to  {\sc Maximal SSG} satisfies \eqref{dc}, \eqref{bd} and \eqref{busy1}.
Since \eqref{busy1} is stronger than \eqref{busy}, one direction of the equivalence (i.e. $\Rightarrow$) holds trivially.

%\smallskip

For the other direction, take any $S\subseteq V$ satisfying \eqref{dc}, \eqref{bd} and \eqref{busy}.
By contradiction, suppose there exists $S'\supset S$ such that $S'$ satisfies \eqref{dc} and \eqref{bd}.
Since $G$ is a DAG, the sub-graph $G'$ induced by $S'\setminus S$ is a non-empty DAG. So $G'$ contains a sink $v\in S'\setminus S$.
By definition, $S\cup \{v\}$ satisfies the digraph constraints \eqref{dc}. Moreover, $S\cup \{v\}$ satisfies the budget constraint \eqref{bd} because
$w(S\cup \{v\})\leq w(S')\leq B$. We get a contradiction with \eqref{busy}.  \hfill\qed
\end{proof}

\begin{remark}\label{rem:lem_local}
Lemma \ref{lem_local} is equivalent to check that for a sink $v$ of minimum weight (among the sinks of $G[V\setminus S])$, $w(B)+w(v)>B$.
\end{remark}

\begin{remark}
 Using \eqref{busy}, one can verify if a set $S\subseteq V$ is feasible for {\sc Maximal SSG} in $O(m)$.
\end{remark}

%\subsection{Complexity results in DAG}

The \textsc{Lazy Bureaucrat Problem} with a common release date and a common deadline has been proved (weakly) \textbf{NP}-hard \cite{GZ08} and  admitting a pseudo-polynomial algorithm \cite{EsfahbodGS03}; recently, a \textbf{FPTAS} was proposed  in \cite{GMP13}.
 Here, we prove that \textsc{Maximal SSG}, a generalization of the lazy bureaucrat problem, is much harder.

\begin{theorem}\label{LazyNP-complete}
\textsc{Maximal SSG} is Strongly \textbf{NP}-hard in connected DAG with a unique sink and 3 distinct weights.
\end{theorem}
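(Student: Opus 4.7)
My plan is to reduce from a strongly \textbf{NP}-complete covering problem such as \textsc{Exact Cover by 3-Sets} (X3C). An instance is a universe $U$ with $|U|=3q$ together with a collection $\mathcal{C}$ of $3$-element subsets of $U$, and we ask whether some $\mathcal{C}^\star\subseteq\mathcal{C}$ of size $q$ partitions $U$.

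Given such an instance, the construction yields a layered DAG $G'=(V',A')$ with three weight classes $\alpha,\beta,\gamma$: an element-vertex $u_e$ of weight $\alpha$ for each $e\in U$, a set-vertex $v_C$ of weight $\beta$ for each $C\in\mathcal{C}$, and a single sink $\sigma$ of weight $\gamma$. The base arcs are $(v_C,u_e)$ whenever $e\in C$ and $(u_e,\sigma)$ for every $e\in U$, making $\sigma$ the unique sink of a connected DAG. On top of the base layer an auxiliary gadget (detailed below) is attached so as to penalise overlap among chosen sets while re-using one of the three weights $\alpha,\beta,\gamma$. The budget $B'$ is calibrated to the weight of the canonical solution obtained from an exact cover.

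For the easy direction, any exact cover $\mathcal{C}^\star$ yields $S=\{v_C:C\in\mathcal{C}^\star\}\cup\{u_e:e\in U\}\cup\{\sigma\}$ augmented by the auxiliary vertices forced by the digraph constraints; this set satisfies (\ref{dc}) and (\ref{bd}), has weight exactly $B'$, and is maximal since any additional $v_{C'}$ would contribute at least $\beta$ and overshoot $B'$. For the reverse direction I would invoke Remark \ref{rem:lem_local}: a maximal feasible $S$ of weight at most $B'$ must satisfy (i) $\sigma\in S$ whenever $S$ is non-empty, because $\sigma$ is reachable from every vertex, (ii) all element-vertices are in $S$, because any $u_e\notin S$ becomes a sink of $G'[V'\setminus S]$ and maximality forces $w(S)>B'-\alpha$, a range incompatible with the budget once the weights are tuned, (iii) exactly $q$ set-vertices belong to $S$, by a counting argument on the residual budget combined with maximality, and (iv) the $q$ chosen sets pairwise avoid each other, giving an exact cover.

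The main obstacle is step (iv). A naive three-weight construction treats every $q$-sized selection of sets equivalently once all element-vertices and the sink are present, producing the same weight whether or not the chosen sets partition the universe. Designing the auxiliary overlap gadget so that overlap strictly inflates $w(S)$ past $B'$ while keeping the total number of distinct weights at three is the technical crux. A possible strategy is to introduce, for each element $e\in U$, parallel private descendants of the set-vertices containing $e$ whose cumulative weight depends monotonically on the number of chosen sets containing $e$; the digraph constraints then force each extra chosen set covering $e$ to drag in additional penalty weight. Once the inequalities relating $\alpha,\beta,\gamma$ and $B'$ are balanced so as to make overlap infeasible and exact cover tight, both directions of the equivalence become routine. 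Since the construction is polynomial in $|U|+|\mathcal{C}|$ and all numerical values are polynomially bounded, strong \textbf{NP}-hardness of \textsc{Maximal SSG} on connected DAGs with a unique sink and three distinct weights follows.
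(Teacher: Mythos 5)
Your reduction is not the paper's (which goes from \textsc{Cardinality SSG}, itself obtained from \textsc{Clique} via Corollary \ref{CardinalitySSG-DAG-NPC}, by appending a path of $p+1$ dummy vertices of weight $p^2B$ and shifting every original weight by $p^2B+pB$), and as it stands it has a genuine gap that you yourself identify but do not close: step (iv). Without the overlap gadget, once all element-vertices and the sink are forced into $S$ by maximality, the arcs $(v_C,u_e)$ impose nothing, and every choice of $q$ set-vertices yields exactly the same weight $q\beta+3q\alpha+\gamma$; the reduction then merely decides whether $|\mathcal{C}|\ge q$. Everything therefore rests on the unspecified gadget, and the sketch you give for it does not work. ``Private descendants'' of the set-vertices contribute a weight proportional to the number of chosen sets, which is fixed at $q$, so their total is again independent of overlap. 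More fundamentally, the digraph constraint \eqref{dc} propagates only forward from each \emph{individual} selected vertex: a vertex $y$ is dragged into $S$ as soon as \emph{one} of its in-neighbours is selected, so no arc structure can make $y$'s forced inclusion conditional on \emph{two or more} specific sets being chosen (that would be the weak constraint \eqref{sdc}, i.e.\ the SSGW setting). Hence overlap cannot be detected structurally; it could only be detected arithmetically through the budget, but with three distinct weights the budget can only count how many vertices of each weight class are selected, not \emph{which} ones. The standard X3C-to-subset-sum trick of digit-encoding membership needs polynomially many distinct weights and is unavailable here.

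The paper sidesteps this obstacle by putting all the combinatorial hardness into the source problem: \textsc{Cardinality SSG} is already strongly \textbf{NP}-complete in DAGs with only two distinct weights, asking for a set of prescribed cardinality $p$ and exact weight $B$ satisfying \eqref{dc}. The reduction to \textsc{Maximal SSG} then only has to convert ``cardinality exactly $p$, weight exactly $B$'' into ``maximal and weight at most $q$'', which it does with a uniform additive shift of the weights (preserving two values on the original vertices) and one new weight on the appended path, so that any maximal solution of weight at most $q$ is forced to contain exactly $p$ original vertices and exactly one dummy vertex. If you want to salvage your route, you would need either to start from a source problem whose hardness survives a two- or three-valued weight function (as the paper does), or to find a genuinely different mechanism for encoding exact cover; the gadget as proposed cannot be made to work.
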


\begin{proof}
We propose a Karp reduction from \textsc{Cardinality SSG} proved strongly \textbf{NP}-complete even with 2 distinct weights
in Corollary \ref{CardinalitySSG-DAG-NPC}.

Given an instance $I=(G,w,p,B)$ of \textsc{Cardinality SSG} where $G=(V,A)$ is a DAG, $V=\{v_1,\dots,v_n\}$, $1\leq w(v)\leq B-p$
and  2 distinct weights, we build an instance $I'=(G',w',B',q)$ of the decision version of  \textsc{Maximal SSG} ($I$ is a yes-instance if there exists a feasible solution of weight at most $q$) where $G'=(V',A')$ is a connected DAG with a unique sink as follows:
\begin{itemize}

\item[$\bullet$] $G'=(V',A')$ has $n+p+1$ vertices $V'=V\cup D$ where $D=\{v_{n+i}:i=1,\dots,p+1\}$ and contains $G$ as a subgraph. If $\mathrm{S}$ denotes the set of sinks of $G$, then $A'=A\cup \{(u,v_{n+1}):u\in \mathrm{S}\}\cup \{(v_{i+1},v_i):i=n+1,\dots,n+p\}$.
\smallskip

\item[$\bullet$] $w'(v)=p^2B+pB+w(v)$ for $v\in V$ while $w'(v_{n+i})=p^2B$ for $i=1,\dots,p+1$. Finally,
$B'=p^3B+3p^2B+B-1$.
\smallskip

\item[$\bullet$] $q=p^3B+2p^2B+B$.
\end{itemize}

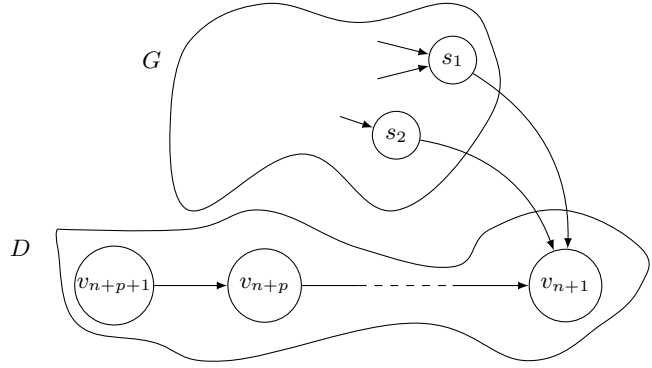
\begin{figure}[h!]
\centering
\begin{tikzpicture}[scale=0.5]
\draw  plot[smooth, tension=.7] coordinates {(-3.5,0.5) (-3,2.5) (-1,3.5) (1.5,3) (4,3.5) (5,2.5) (5,0.5) (2.5,-2) (0,-0.5) (-3,-2) (-3.5,0.5)};

\draw  plot[smooth, tension=.7] coordinates {(-6.5,-2.5) (-3,-2.5) (-1,-2) (1.5,-3) (4,-3.5) (5,-2.5)  (7,-2) (9,-3) (9,-4) (8,-5) (6,-6) (3,-5) (-2,-6) (-4,-5.5) (-6,-5) (-6.5, -3) (-6.5,-2.5)};

\begin{scope}[every node/.style={draw,circle,black}]
\node(s1) at (4,2) {$s_1$};
\node(s2) at (2.5,0) {$s_2$};
\node(2) at (-1,-4) {$v_{n+p}$} ;
\node(k+1) at (7,-4) {$v_{n+1}$} ;
\end{scope}

\begin{scope}[every node/.style={draw,circle,black,inner sep=0.4}]
\node(1) at (-5,-4) {$v_{n+p+1}$};
\end{scope}

\draw[-latex] (1) to (2) ;
\draw (2) to (1.5,-4) ;
\draw[dashed] (1.5,-4) to (4,-4) ;
\draw[-latex] (4,-4) to (k+1) ;
\draw[-latex] (2,2.5) to (s1) ;
\draw[-latex] (2,1.5) to (s1) ;
\draw[-latex] (1,0.5) to (s2) ;
\draw[-latex] (s1) to [bend left] (k+1) ;
\draw[-latex] (s2) to [bend left] (k+1) ;

\node(0) at (-4,2) {$G$} ;

\node(0) at (-7.5,-3) {$D$} ;

\end{tikzpicture}

 \caption{The graph $G'$.}
 \label{fig1}
\end{figure}

Figure \ref{fig1} gives an illustration of this construction. $G'$ is a connected DAG with unique sink and $w'(v)$ can take at most 3 distinct values. These values are positive integers. Obviously, this reduction can be done in polynomial time, and all values $w'(v)$ and $B'$ are upper bounded by a polynomial because $w(v)$ and $B$ are as such. \\

We claim that $I$ is a yes-instance of \textsc{Cardinality SSG}, that is $\exists J\subset V$ with $|J|=p$, $J$ satisfies \eqref{dc} and $w(J)=B$ iff $\exists J'\subset V'$ such that $w'(J')\leq q=p^3B+2p^2B+B$ and $w'(J'')>B'$ for every $J''$ of $V'$ satisfying the digraph constraints and containing $J'$.\\

Clearly, if $I$ is a yes-instance of \textsc{Cardinality SSG}, then by definition  there is $J\subseteq V$ with
$|J|=p$ such that $J$ satisfies \eqref{dc} and $\sum_{v\in J}w(v)=B$. Hence for \textsc{Maximal SSG}, the subset $J'=J\cup\{v_{n+1}\}$ has weight $\sum_{u\in J'}w'(u)=p^2B+p(p^2B+pB)+\sum_{v\in J}w(v)=p^3B+2p^2B+B=q$. This set also satisfies the maximality constraint because using
Property \eqref{busy} of Lemma \ref{lem_local}, we know that the addition to $J'$ of any sink $u$ of the subgraph of $G'$ induced by $V'\setminus J'$  gives $w'(J'\cup\{u\})\geq q+p^2B=p^3B+3p^2B+B>B'$ since $w'(v)\geq p^2B$ for all $v\in V'$. \\

Conversely, let $J'\subset V'$ be a feasible solution of {\sc Maximal SSG} in $G'$ with weight $\sum_{v\in J'}w'(v)\leq q=p^3B+2p^2B+B$. First, let us show that $(i)$ $v_{n+1}\in J'$ and $D\nsubseteq J'$, $(ii)$ this sum $w(J')=q=p^3B+2p^2B+B$  and $(iii)$ $|J'|=p+1$ and $J'\cap D=\{v_{n+1}\}$.

\begin{itemize}
\item For $(i)$. By the maximality constraint and because $v_{n+1}$ is the unique sink of $G'$, we must have  $v_{n+1}\in J'$. Now by contradiction, suppose $D\subseteq J'$. Then, let us prove that $J'\cap V\neq \emptyset$ because otherwise $J'=D$. Since $G=(V,A)$ is a DAG, there exists a sink $u\notin J'$ of $G$. The maximality constraint on $G'$ is not satisfied because  $B'- w'(J')=(p^3B+3p^2B+B-1)-(p+1)p^2B=2p^2B+B-1>w'(u)$. Hence, $u$ should be added to  $J'$. In conclusion $D\cup\{u\}\subseteq J'$ and we deduce $w'(J')\geq w'(D\cup\{u\})\geq (p+1)p^2B+p^2B+pB+1>p^3B+2p^2B+B=q$ which is a contradiction with the initial hypothesis.\\

    \item For $(ii)$. Using $(i)$, we know that there exists $v_{n+\ell}\notin J'$ and $v_{n+\ell-1}\in J'$ for some $\ell\in \{2,\dots,p+1\}$. Hence, the maximality constraint imposes that $v_{n+\ell}$ is a sink in the subgraph of $G'$ induced by $V' \setminus J'$, so  $w'(J')\geq B'-w'(v_{n+\ell})+1=p^3B+2p^2B+B=q$.\\

       \item For $(iii)$. First, observe that $|J'\setminus\{v_{n+1}\}|=p$ because on the one hand if $|J'|\leq p$, then $w'(J')\leq p^2B+ (p-1)(p^2B+pB+B) <p^3B+p^2B-B<q$ and on the other hand, if  $|J'|\geq p+2$, then $w'(J')> p\times p^2B+2(p^2B+pB+1)=p^3B+2p^2B+2pB+2>p^3B+2p^2B+B=q$, because for both cases by item $(i)$ we know $|J'\cap D|\leq p$ and the weights of nodes in $D$ are the smallest of $G'$. These two cases lead to a contradiction with item $(ii)$. Finally, let us prove that $J'\cap D=\{v_{n+1}\}$. Otherwise, $|J'\cap V|\leq p-1$. Since the worst case appears when $|J'\cap V|= p-1$, then $w'(J')< \left((p-1)(p^2B+pB)+(p-1)B\right)+2p^2B=p^3B+2p^2B-B<p^3B+2p^2B+B=q$, which is another contradiction with item $(ii)$.

\end{itemize}

Using $(i)$, $(ii)$ and $(iii)$, and by setting $J=J'\setminus \{v_{n+1}\}$ we deduce $J\subseteq V$, $J$ satisfies the digraph constraints and $|J|=p$ with $w(J)=w'(J')-p(p^2B+pB)-p^2B=q- p^3B-2p^2B=B$. Hence, $I$ is a yes-instance of \textsc{Cardinality SSG}. \hfill\qed
\end{proof}

We now prove that the weak digraph constraints versions of the two problems are as hard to approximate as two variants of the
{\sc Independent Set} problem (\textsc{IS} in short):

\begin{center}
\begin{tabular}{|l|}
\hline
\textsc{IS}\\
\hline
\textsl{Input}: a connected simple graph $G=(V,E)$. \\
\textsl{Output}:  $V'\subset V$ such that no two vertices in $V'$ are joined by an edge in $E$.\\
\textsl{Objective}: maximize $|V'|$.\\
\hline
\end{tabular}
\end{center}

\noindent The {\em independence number} of $G$, denoted by $\alpha(G)$, is the maximum size of an independent set in
$G$. \textsc{IS} is known to be \textbf{APX}-complete in cubic graphs \cite{AlimontiK00}, \textbf{NP}-hard in planar graphs with maximum degree $\Delta(G)\leq 3$ \cite{GJ79} and no polynomial algorithm can approximately solve it within the ratio
$n^{\varepsilon-1}$ for any $\varepsilon\in (0;1)$, unless \textbf{P}$=$\textbf{ZPP}
\cite{hastad1996clique}.

\medskip

The second problem is the {\sc Minimum Independent Dominating Set} problem (\textsc{ISDS} in short) also known as the {\sc Minimum Maximal Independent Set} problem (Problem [GT2], see comment page 190 in \cite{GJ79}).

\begin{center}
\begin{tabular}{|l|}
\hline
\textsc{ISDS}\\
\hline
\textsl{Input}: a connected simple graph $G=(V,E)$. \\
\textsl{Output}:  $V'\subset V$ such that no two vertices in $V'$ are joined by an edge in $E$\\
and for all $u\in V\setminus V'$, there is a $v\in V'$ for which $(u,v)\in E$.\\
\textsl{Objective}: minimize $|V'|$.\\
\hline
\end{tabular}
\end{center}

\medskip

\noindent The {\em independent domination number} of $G$, denoted by $i(G)$  is the minimum size
of an independent dominating set in $G$. Obviously, $i(G)\leq \alpha(G)$.
\textsc{ISDS} is known to be \textbf{NP}-hard \cite{GJ79}, even for planar cubic graphs \cite{Manlove99} and it is \textbf{APX}-complete
for graphs of maximum degree $3$ \cite{ChlebikC08}. Furthermore, it is also very hard from an approximation point of view, since no polynomial algorithm
can approximately solve it within the ratio $n^{1-\varepsilon}$ for any $\varepsilon\in (0;1)$, unless \textbf{P}$=$\textbf{NP} \cite{Halldorsson93a}.

\begin{theorem}\label{Theo:Reduc-SSGW}
There is a polynomial reduction preserving approximation from
\begin{enumerate}
\item \textsc{IS} in general graphs to \textsc{SSGW} in DAG with maximum in-degree 2.
\item \textsc{ISDS} in general graphs to \textsc{Maximal SSGW} in DAG with maximum in-degree 2.
\end{enumerate}
\end{theorem}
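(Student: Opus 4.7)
For both parts I use the same construction. Given a connected simple graph $G=(V,E)$ with $V=\{v_1,\dots,v_n\}$, build a DAG $G'=(V',A')$ by adding one \emph{edge-vertex} $w_e$ per edge $e=[u,v]\in E$; put $V'=V\cup V_E$ with $V_E=\{w_e:e\in E\}$, and $A'=\{(u,w_e),(v,w_e):e=[u,v]\in E\}$. Then $G'$ is bipartite and acyclic, every $v\in V$ is a source, and every $w_e\in V_E$ has in-degree exactly $2$, so the maximum in-degree of $G'$ is $2$. The weights are $w'(v)=1$ for $v\in V$ and $w'(w_e)=n+1$ for $w_e\in V_E$, and the budget is $B=n$. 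Clearly the construction is polynomial.

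For part~1 (reduction from \textsc{IS} to \textsc{SSGW}), I claim that a set $S\subseteq V'$ is feasible for \textsc{SSGW} on $(G',w',B)$ if and only if $S\subseteq V$ and $S$ is independent in $G$. Indeed, any $w_e$ included in $S$ already overshoots the budget since $w'(w_e)=n+1>B$, so $S\cap V_E=\emptyset$; and then the weak digraph constraint \eqref{sdc} on each $w_e$ says exactly that $\{u,v\}\not\subseteq S$ for every $e=[u,v]$, i.e.\ $S$ is independent in $G$. Under this bijection the objective $w'(S)=|S|$ matches $|V'|$ in \textsc{IS}, so any $\rho$-approximate solution on one side yields a $\rho$-approximate one on the other, and an optimum maps to an optimum: this is an approximation-preserving reduction.

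For part~2 (from \textsc{ISDS} to \textsc{Maximal SSGW}), using the same instance, I show that feasible solutions of \textsc{Maximal SSGW} are in weight-preserving bijection with independent dominating sets of $G$. As in part~1, feasibility of $(S,G',w',B)$ already forces $S\cap V_E=\emptyset$ and $S\cap V$ independent. The extra ingredient is the maximality constraint \eqref{sbusy}: for every $x\in V'\setminus S$, adding $x$ must violate \eqref{sdc} or \eqref{bd}. If $x=w_e$, the budget is violated since $w'(w_e)=n+1>B-w'(S)$, so this case is automatic. If $x=v\in V\setminus S$ and $v$ has no neighbor in $S\cap V$ in $G$, then $S\cup\{v\}$ is still an independent subset of $V$ of weight $\le n=B$, so it is feasible, contradicting maximality; conversely, if $v$ does have a neighbor $u\in S\cap V$ in $G$, then both in-neighbors of $w_{[u,v]}$ lie in $S\cup\{v\}$, so \eqref{sdc} forces $w_{[u,v]}\in S\cup\{v\}$ and the weight jumps above $B$. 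Hence maximality is equivalent to $S\cap V$ dominating $V$ in $G$. Combined with independence, this is exactly an independent dominating set, and $w'(S)=|S\cap V|$ matches its cardinality, giving the desired approximation-preserving reduction.

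The only step that requires care is the maximality direction in part~2, because one has to be sure that no other way of extending $S$ (e.g.\ by an edge-vertex, or by an isolated vertex of $G\setminus S$) leaves the problem feasible; the weight gap $w'(w_e)=n+1>B$ was chosen precisely to make both forms of extension behave uniformly, and the rest is a routine check against \eqref{sdc} and \eqref{bd}.
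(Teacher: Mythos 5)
Your construction is exactly the paper's: one in-degree-2 edge-vertex per edge with weight $n+1$, unit weights on the original vertices, and budget $B=n$, with feasibility forcing $S\cap V_E=\emptyset$ and independence, and maximality (via the local characterization of Lemma~\ref{lem_local}) forcing domination. The proposal is correct and follows essentially the same approach as the paper; if anything, your case analysis for part~2 and your observation that \textsc{SSGW}-feasible sets correspond to \emph{all} independent sets (maximality only entering for \textsc{Maximal SSGW}) is stated a bit more cleanly than in the paper's own argument.
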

\begin{proof}
Let $G$ be a connected graph where $G=(V,E)$  with
$V=\{v_1,\dots,v_n\}$. We construct a corresponding instance $I=(G'=(V',A'),w,B)$ of \textsc{SSGW} and \textsc{Maximal SSGW} as follows:
Let $G'=(V',A')$ be a digraph defined by $V'=V\cup V_E$ where $V_E=\{v_e:e\in E\}$ and $A'=\{(v_i,v_{e}): e=[v_i,v_j]\in E\}$.
Set $w(v_i)=1$ for $i=1,\dots,n$ and $w(v_e)=n+1$ for all $v_e\in V_E$ and $B=n$.
Clearly, $G'$ is a connected DAG with maximum in-degree 2.
An example of such reduction is given in Figure \ref{fig2Theo:Reduc-SSGW}.
\smallskip

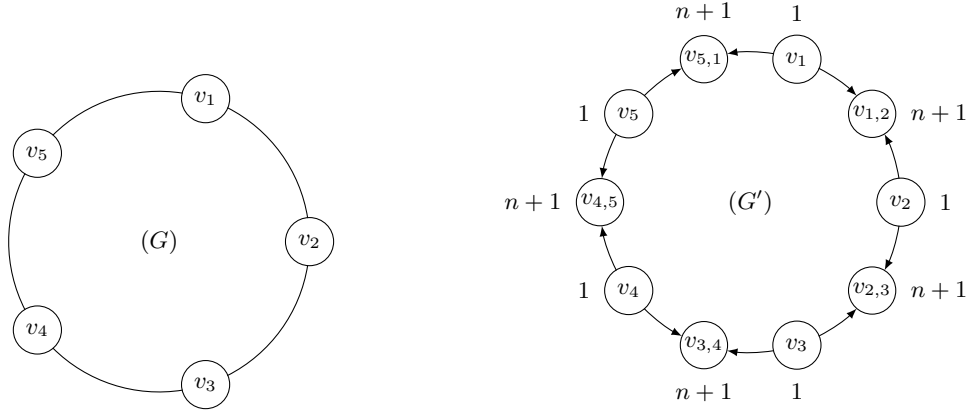
\begin{figure}[h!]
\centering
\begin{tikzpicture}
\node(0) at (0,0) {$(G)$} ;

\def \n {5}
\def \radius {2cm}
\def \margin {9} % margin in angles, depends on the radius

\node[draw, circle] at ({360/\n * (1 )}:\radius) {$v_1$};
\node[draw, circle] at ({360/\n * (2 )}:\radius) {$v_5$};
\node[draw, circle] at ({360/\n * (3 )}:\radius) {$v_4$};
\node[draw, circle] at ({360/\n * (4 )}:\radius) {$v_3$};
\node[draw, circle] at ({360/\n * (5 )}:\radius) {$v_2$};

\foreach \s in {1,...,\n}
{
  %\node[draw, circle] at ({360/\n * (\s )}:\radius) {$v_\s$};
  \draw[-] ({360/\n * (\s )+\margin}:\radius)
    arc ({360/\n * (\s )+\margin}:{360/\n * (\s+1)-\margin}:\radius);
}
\end{tikzpicture}
\hspace{2cm}
\begin{tikzpicture}
\node(0) at (0,0) {$(G')$} ;

\def \n {10}
\def \radius {2cm}
\def \margin {9} % margin in angles, depends on the radius

\node[draw, circle,inner sep=0.8](1) at ({360/10 * (1 )}:\radius) {$v_{1,2}$};
\node[draw, circle](2) at ({360/10 * (2)}:\radius) {$v_{1}$};
\node[draw, circle,inner sep=0.8](3) at ({360/10 * (3 )}:\radius) {$v_{5,1}$};
\node[draw, circle](4) at ({360/10 * (4 )}:\radius) {$v_5$};
\node[draw, circle,inner sep=0.8](5) at ({360/10 * (5 )}:\radius) {$v_{4,5}$};
\node[draw, circle](6) at ({360/10 * (6 )}:\radius) {$v_{4}$};
\node[draw, circle,inner sep=0.8](7) at ({360/10 * (7 )}:\radius) {$v_{3,4}$};
\node[draw, circle](8) at ({360/10 * (8 )}:\radius) {$v_3$};
\node[draw, circle,inner sep=0.8](9) at ({360/10 * (9 )}:\radius) {$v_{2,3}$};
\node[draw, circle](10) at ({360/10 * (10 )}:\radius) {$v_2$};

\node at (1) [right=0.4] {$n+1$};
\node at (2) [above=0.4] {$1$};
\node at (3) [above=0.4] {$n+1$};
\node at (4) [left=0.4] {$1$};
\node at (5) [left=0.4] {$n+1$};
\node at (6) [left=0.4] {$1$};
\node at (7) [below=0.4] {$n+1$};
\node at (8) [below=0.4] {$1$};
\node at (9) [right=0.4] {$n+1$};
\node at (10) [right=0.4] {$1$};

\foreach \s in {1,...,\n}
{
    \ifthenelse{\isodd{\s}}
    {\draw[<-, >=latex] ({360/\n * (\s )+\margin}:\radius) arc ({360/\n * (\s )+\margin}:{360/\n * (\s+1)-\margin}:\radius);}
    {\draw[->, >=latex] ({360/\n * (\s )+\margin}:\radius) arc ({360/\n * (\s )+\margin}:{360/\n * (\s+1)-\margin}:\radius);};

}
\end{tikzpicture}
\caption{Example of reduction.}
 \label{fig2Theo:Reduc-SSGW}
\end{figure}

We claim that $S$ is a maximal independent set of $G$ if and only if the same set of vertices in digraph $G'$ is a feasible solution to \textsc{SSGW} (\textsc{Maximal SSGW} resp.).\\

$S$ is a maximal independent set of $G$ iff this subset satisfies the weak digraph constraints in $G'$ (otherwise, $v_e$ for some $e=[x,y]$ with $x,y\in S$ should be added to $S)$ and using Lemma \ref{lem_local}, we can not add a new vertex  because  on the one hand, $w(S)=|S|\leq n$ and on the other hand, either $S+v$ does not satisfies the weak  digraph constraints for $v\notin S$ by maximality of $S$ or $w(S+v_e)>n$ for some $e\in E$.

\bigskip

For item $1$, the result follows from the definition of \textsc{IS} while for item $2$ it is the the definition of \textsc{ISDS}.\hfill\qed
\end{proof}

\begin{corollary}\label{Cor:Theo-Reduc-SSGW}
In DAG with maximum in-degree is 2, we have:
\begin{enumerate}
\item \textsc{SSGW} and \textsc{Maximal SSGW} are \textbf{APX}-hard even if the maximum out-degree is 3.

\item For every $\varepsilon \in (0,1/2)$, \textsc{SSGW} (\textsc{Maximal SSGW} resp.) is  not $n^{\varepsilon-\frac{1}{2}}$, unless \textbf{P}$=$\textbf{ZPP} $(n^{\frac{1}{2}-\varepsilon}$ unless \textbf{P}$=$\textbf{NP} resp.).
\end{enumerate}
\end{corollary}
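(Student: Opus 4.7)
The plan is to combine Theorem \ref{Theo:Reduc-SSGW} with the hardness results for \textsc{IS} and \textsc{ISDS} recalled just above the corollary.

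For item 1, I first inspect the degrees in the digraph $G'$ produced by the reduction of Theorem \ref{Theo:Reduc-SSGW}. Every original vertex $v_i \in V$ satisfies $\deg^-_{G'}(v_i) = 0$ and $\deg^+_{G'}(v_i) = \deg_G(v_i)$, while every edge vertex $v_e$ has $\deg^-_{G'}(v_e) = 2$ and $\deg^+_{G'}(v_e) = 0$. So when $G$ has maximum degree $3$, $G'$ has maximum in-degree $2$ and maximum out-degree $3$. Since \textsc{IS} is \textbf{APX}-complete on cubic graphs \cite{AlimontiK00} and \textsc{ISDS} is \textbf{APX}-complete on graphs of maximum degree $3$ \cite{ChlebikC08}, and the reduction of Theorem \ref{Theo:Reduc-SSGW} preserves approximation, the \textbf{APX}-hardness carries over to \textsc{SSGW} and \textsc{Maximal SSGW} on this restricted class of DAG.

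For item 2, the essential point is to track the size blow-up of the reduction. If $G$ has $n$ vertices and $m$ edges, then $G'$ has $n' = n + m$ vertices, which can reach $\Theta(n^2)$ on dense graphs; in particular $n' \le c\, n^2$ for some absolute constant $c$. Suppose, for contradiction, that for some $\varepsilon \in (0,1/2)$ there is a polynomial $(n')^{\varepsilon - 1/2}$-approximation for \textsc{SSGW}. Through the approximation-preserving reduction this yields a polynomial algorithm for \textsc{IS} whose approximation ratio, expressed in terms of the IS instance size $n$, is at least $(cn^2)^{\varepsilon - 1/2} = \Omega(n^{2\varepsilon - 1})$ (the exponent $\varepsilon - 1/2$ being negative, $(n')^{\varepsilon - 1/2}$ only improves when $n'$ shrinks). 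Choosing any $\varepsilon'$ with $2\varepsilon < \varepsilon' < 1$ then produces an $n^{\varepsilon' - 1}$-approximation for \textsc{IS} on general graphs, contradicting \cite{hastad1996clique} unless $\mathbf{P} = \mathbf{ZPP}$. The argument for \textsc{Maximal SSGW} is strictly symmetric: a hypothetical $(n')^{1/2 - \varepsilon}$-approximation algorithm would, after the same size substitution $n' \le c\,n^2$, yield an $n^{1 - \varepsilon''}$-approximation for \textsc{ISDS} for some $\varepsilon'' \in (0,1)$, contradicting \cite{Halldorsson93a} unless $\mathbf{P} = \mathbf{NP}$.

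The only real technical step is the exponent bookkeeping across the quadratic blow-up $n' = \Theta(n^2)$ on worst-case dense input graphs; the different complexity hypotheses appearing in the two statements ($\mathbf{ZPP}$ versus $\mathbf{NP}$) are inherited directly from the strength of the respective inapproximability results for \textsc{IS} and \textsc{ISDS}. Once this exponent computation is carried out, both items follow immediately from Theorem \ref{Theo:Reduc-SSGW}.
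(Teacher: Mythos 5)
Your proposal is correct and follows essentially the same route as the paper: item 1 from the approximation-preserving reduction of Theorem \ref{Theo:Reduc-SSGW} combined with the \textbf{APX}-completeness of \textsc{IS} and \textsc{ISDS} on degree-3 graphs, and item 2 from the size bound $|V(G')| = |V(G)| + |E(G)| \le |V(G)|^2$ together with the $n^{\varepsilon-1}$-inapproximability of \textsc{IS} and the $n^{1-\varepsilon}$-inapproximability of \textsc{ISDS}. One cosmetic slip: having obtained an $n^{2\varepsilon-1}$-approximation for \textsc{IS}, you should simply take $\varepsilon'' = 2\varepsilon \in (0,1)$ to contradict the H{\aa}stad bound directly; picking $\varepsilon' > 2\varepsilon$ as written gives a \emph{stronger} forbidden ratio that your algorithm does not actually achieve, though the conclusion is unaffected.
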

\begin{proof}
For item $1$, it is a consequence of Theorem \ref{Theo:Reduc-SSGW} together with the results of \cite{AlimontiK00,ChlebikC08}.\\

For item $2$, we use the negative results given in \cite{hastad1996clique,Halldorsson93a} and the reduction of Theorem \ref{Theo:Reduc-SSGW}
with $|V(G')|= |E(G)|+|V(G)|\leq |V(G)|^2$.\hfill\qed
\end{proof}

\begin{lemma}\label{lem_Tournament}
\textsc{SSG} and \textsc{Maximal SSG} are polynomial-time solvable in tournaments.
\end{lemma}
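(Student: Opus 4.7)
The plan is to exploit the well-known structural fact that the strongly connected components of a tournament $T$ admit a unique linear order $C_1,C_2,\dots,C_k$ such that, for every $i<j$, every arc between $C_i$ and $C_j$ is oriented from $C_i$ to $C_j$. Equivalently, the condensation of a tournament is itself a transitive tournament, i.e.\ a total order on the SCCs. By Lemma~\ref{lem} it is sufficient to solve \textsc{SSG} and \textsc{Maximal SSG} on this condensed DAG, whose vertex $C_i$ carries weight $w(C_i)=\sum_{v\in C_i}w(v)$.

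The key structural observation is that, in this transitive-tournament DAG, a subset $S$ satisfies the digraph constraint \eqref{dc} if and only if $S$ is a \emph{suffix} of the linear order, i.e.\ $S=\emptyset$ or $S=S_i:=\{C_i,C_{i+1},\dots,C_k\}$ for some $i\in\{1,\dots,k\}$. Indeed, if $C_i\in S$ then the arcs $C_i\to C_{i+1},\dots,C_i\to C_k$ force $C_{i+1},\dots,C_k\in S$; conversely every suffix $S_i$ is trivially closed under outgoing arcs since all such arcs go to indices $>i$. Because $k\le n$, this yields only $k+1\le n+1$ candidate feasible solutions, which can be listed in polynomial time after computing the SCC decomposition and its topological order.

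For \textsc{SSG} it then suffices to select, among the suffixes $S_i$ satisfying $w(S_i)\le B$, one of maximum weight (or the empty set if none does). For \textsc{Maximal SSG}, using Lemma~\ref{lem_local} (and the fact noted in Remark~\ref{rem:lem_local}), a feasible $S_i$ is maximal iff it cannot be extended while remaining feasible; since every strict feasible superset of $S_i$ must be of the form $S_j$ with $j<i$ and weights are non-negative, $S_i$ is maximal iff $i=1$ or $w(S_{i-1})>B$. We then return the maximal suffix minimizing $w(S_i)$. Both procedures run in polynomial time (SCC decomposition plus a linear sweep).

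There is no significant obstacle; the only point worth being careful about is justifying that the condensation of a tournament is a transitive tournament and that, therefore, the only feasible sets for \eqref{dc} are suffixes of its linear order, so that the search space shrinks from exponential to linear in $n$.
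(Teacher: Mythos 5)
Your proposal is correct and follows essentially the same route as the paper: condense the tournament via Lemma~\ref{lem} to an acyclic (equivalently, transitive) tournament, observe that the sets satisfying \eqref{dc} are exactly the suffixes of the resulting linear order, and then scan the at most $n+1$ candidates. The paper phrases the linear order via the unique Hamiltonian path of the acyclic tournament rather than via the SCC ordering, but this is only a cosmetic difference; your explicit maximality criterion ($i=1$ or $w(S_{i-1})>B$) for \textsc{Maximal SSG} is a correct instantiation of Lemma~\ref{lem_local} that the paper leaves implicit.
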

\begin{proof}
Let $G$ be a tournament digraph. We may assume that $G$ is acyclic by Lemma \ref{lem}.
Thus $G$ contains a unique Hamiltonian  path \cite{BP93}.
We denote it by $\mathcal{H}=\{(v_i,v_{i+1}):i=1,\dots,n-1\}$
where $n$ is the number of vertices of $G$. Due to the digraph constraints,
a feasible solution is, either the empty set, or a subset $\{(v_i,v_{i+1}): i=k,\dots,n-1\}$
for some $k\in \{1,\dots,n-1\}$ because $G$ is an acyclic tournament and then $(v_i,v_j)\notin A$ for $j<i$. Once $\mathcal{H}$ is found (this can be done in polynomial time), we can make a binary
search on $\mathcal{H}$ to find a solution if it exists in $O(n\log n)$. \hfill\qed
%
% \begin{algorithm}[!ht]
% \caption{\textsc{Lazy Subset Sum with digraph constraints in tournament digraphs}}
% \label{algo}
% %\SetLine
%
% \KwData{a tournament digraph $G=(V,E)$, a weight function $w:V\rightarrow \mathbb{R}^+$, a positive budget $B$}
% \KwResult{a set $S\subseteq V$  satisfying the digraph constraints with $w(S)\leq B$ and $\forall
% x\notin S$, either $S+x$ does not satisfy the digraph constraints or $w(S+x)>B$}
%
% $S\leftarrow \emptyset$
%
% $end\leftarrow 0$
%
% \While{$\lnot end$}{
% Let $v\in V$ be the unique vertex without out-neighborhoods in $G$
%
% \eIf{$w(S)+w(v)\leq B$}{
% $S\leftarrow S\cup \{v\}$
% }{
% $end\leftarrow 1$
% }
% }
%
% \Return $S$
% \end{algorithm}
%
%The answer to \textsc{Subset Sum with digraph constraints in tournament digraphs} is yes if and only if $w(S)=B$.

\end{proof}

\section{Oriented Trees}\label{tree}

\begin{proposition}\label{Prop-NPC}
The four following problems are \textbf{NP}-hard in out-rooted trees and in in-rooted trees:
\begin{enumerate}
 \item \textsc{SSG} and \textsc{SSGW}.
 \item \textsc{Maximal SSG} and \textsc{Maximal SSGW}.
\end{enumerate}
\end{proposition}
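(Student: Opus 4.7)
The plan is to reduce two standard \textbf{NP}-hard problems to the four problems in both tree orientations using a common star gadget. For \textsc{SSG} and \textsc{SSGW}, I would reduce from the decision version of the classical \textsc{Subset Sum} problem; for \textsc{Maximal SSG} and \textsc{Maximal SSGW}, I would reduce from the \textsc{Lazy Bureaucrat Problem} with common arrivals and deadlines, which Section~\ref{rel} cites as \textbf{NP}-hard. Given an instance $(a_1,\ldots,a_n,B)$ of the source problem satisfying the non-trivial assumption $\sum_i a_i > B$, I build two stars: an \emph{in-rooted} one with root $r$ and arcs $(r,v_i)$ to $n$ leaves, and an \emph{out-rooted} one with anti-root $r$ and arcs $(v_i,r)$ from $n$ sources. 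In both, $w(r)=0$, $w(v_i)=a_i$, and the budget is $B$.

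For \textsc{SSG} on the in-rooted star, any feasible $S$ containing $r$ would, by \eqref{dc}, include every leaf and therefore have weight $\sum a_i > B$; hence $r \notin S$, and the feasible family coincides with the subsets of leaves of weight at most $B$, so the optimum equals $B$ iff the \textsc{Subset Sum} instance is a yes-instance. For \textsc{SSG} on the out-rooted star, picking any leaf forces $r$ by \eqref{dc}, so feasible solutions are $\emptyset$ or $\{r\} \cup T$ with $w(T) \leq B$, giving the same equivalence. For \textsc{SSGW}, in an in-rooted tree each non-root vertex has in-degree exactly one, making \eqref{sdc} and \eqref{dc} equivalent; on the out-rooted star, the weak constraint at $r$ triggers only if all leaves are in $S$, which is ruled out by $\sum a_i > B$, so the feasible family again reduces to subsets of leaves of weight at most $B$.

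For \textsc{Maximal SSG} on the in-rooted star, feasibility still forces $S \subseteq \{v_1,\ldots,v_n\}$, and by Lemma~\ref{lem_local} (together with Remark~\ref{rem:lem_local}) maximality reduces to the condition $w(S) + a_j > B$ for every $v_j \notin S$, i.e.\ exactly the condition defining a maximal \textsc{Lazy Bureaucrat} solution with identical objective. On the out-rooted star, any maximal $S$ must contain the weight-zero sink $r$ (otherwise $S \cup \{r\}$ satisfies \eqref{dc} and \eqref{bd} and strictly extends $S$); writing $S = \{r\} \cup T$, maximality becomes the \textsc{Lazy Bureaucrat} condition on $T$. The arguments for \textsc{Maximal SSGW} are essentially the same: on the in-rooted star the two constraint families coincide, while on the out-rooted star appending $r$ to any feasible $S$ still satisfies \eqref{sdc} and \eqref{bd}, so every maximal solution contains $r$ and the same reduction applies.

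The main obstacle is careful bookkeeping across eight configurations, ensuring in each case a value-preserving bijection between feasible (resp.\ maximal) solutions of the constructed tree and of the source problem. The only slightly delicate point is to check, in the out-rooted maximal variants, that the weight-zero sink $r$ belongs to every maximal solution, so that what remains is a pure \textsc{Lazy Bureaucrat} subproblem on the leaves; none of the individual verifications is deep.
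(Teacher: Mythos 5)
Your proposal is correct and follows essentially the same route as the paper: a weight-zero center $r$ joined to leaves carrying the item sizes, with \textsc{Subset Sum} as the source problem for \textsc{SSG}/\textsc{SSGW} and the \textsc{Lazy Bureaucrat Problem} with common arrivals and deadlines for the two maximal variants, and the second orientation obtained by reversing all arcs. You are somewhat more explicit than the paper about the WLOG assumption $\sum_i a_i > B$ and about checking each of the eight orientation/constraint combinations, but the gadget and the reductions are identical.
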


\begin{proof}
We only prove the case of out-rooted trees (for in-rooted trees, we reverse the orientation of each arc).\\

For item 1, we show the \textbf{NP}-hardness using a reduction from {\sc Subset Sum} (Problem [SP13], page 223 in \cite{GJ79}) known to be (weakly) \textbf{NP}-complete. This problem is described as follows:
\smallskip

\begin{center}
\begin{tabular}{|l|}
\hline
\textsc{Subset Sum (SS)}\\
\hline
\textsl{Input}: a finite set $X$, a size $s(x)\in \mathbb{Z}^+$ for each $x \in X$ and a positive integer $B$. \\
\textsl{Question}:  is there a subset $S\subseteq X$ such that $s(S)=\sum_{x\in S}s(x)=B$?\\
\hline
\end{tabular}
\end{center}

% \noindent {\sc Subset Sum} (\textsc{SS} in short): given a finite set $X$, a size $s(x)\in \mathbb{Z}^+$ for each $x \in X$ and a positive integer $B$,  is there a subset $S\subseteq X$ such that $s(S)=\sum_{x\in S}s(x)=B$?
% \smallskip

Let $I=(X,s,B)$ be an instance of \textsc{SS}. We polynomially construct a corresponding instance $I'=(T,w,B,k)$ of the  decision version  \textsc{SSG} and \textsc{SSGW} where $k$ is an  integer. Let $T=(V,A)$ be a digraph defined by $V=X\cup \{r\}$ and $A=\{(v,r): v\in X\}$. Clearly, $T$ is an out-rooted tree. The weight function is $w(v)=s(v)$ for all $v\in X$ and $w(r)=0$. Finally, we set $k=B$.
\smallskip

It is easy to show for  \textsc{SSG} (\textsc{SSGW} resp.) that $S$ is a solution of {\sc Subset Sum} if and only if $S\cup \{r\}$ satisfies the digraph constraints \eqref{dc} (the weak digraph constraints \eqref{sdc} resp.)  and $w(S\cup \{r\})\geq k$.
\bigskip

For item 2, we prove the \textbf{NP}-hardness using a reduction from the {\sc Lazy Bureaucrat Problem} with common deadlines and release dates. The decision version of this problem has been shown \textbf{NP}-complete in \cite{GZ08} and it can be described by:
\smallskip

\begin{center}
\begin{tabular}{|l|}
\hline
{\sc Decision Lazy Bureaucrat}\\
\hline
\textsl{Input}: a finite set $X$, a size $s(x)\in \mathbb{Z}^+$ for each $x \in X$, positive integers $B$ and $k\leq B$. \\
\textsl{Question}: is there a subset $S\subseteq X$ such that $s(S)=\sum\limits_{x\in S}s(x)\leq k$ and $\forall x\notin S$, $s(S)+s(x)>B$?\\
\hline
\end{tabular}
\end{center}

% \noindent {\sc Decision Lazy Bureaucrat}: given a finite set $X$, a size $s(x)\in \mathbb{Z}^+$ for each $x \in X$, positive integers $B$ and $k\leq B$,  is there a subset $S\subseteq X$ such that $s(S)=\sum_{x\in S}s(x)\leq k$ and $\forall x\notin S$, $s(S)+s(x)>B$?
% \smallskip

Let $I=(X,s,B,k)$ be an instance of {\sc Decision Lazy Bureaucrat}. We construct an instance $I'=(T,w,B,k)$ in the same way as for item 1.
Clearly,  there is a subset $S\subseteq X$ with $s(S)=\sum_{x\in S}s(x)\leq k$ and $\forall x\notin S$, $s(S)+s(x)>B$ if and only if $S\cup \{r\}$ satisfies the (weak resp.) digraph constraints with $w(S\cup \{r\})\leq k$ and  $\nexists S'\supset S$ that satisfies the (weak resp.) digraph constraints with $w(S'\cup \{r\})\leq B$.\hfill\qed
\end{proof}

\begin{remark}\label{Rem-tree}
The reduction of Proposition \ref{Prop-NPC} can also be modified in order to get $w(v)>0$ for every vertex $v$. Moreover, we can slightly modify the construction in order to obtain a binary tree or a chain.
\end{remark}

We now present some  dynamic programs for solving the different problems defined in Section \ref{pbs}
in the class of trees.
% , but more generally
% also hold for the class of forests (see Remark \ref{rem:forest}).
\smallskip

Beforehand, we introduce some notations on trees. Let $T=(V,A)$ be a directed tree. Let us fix any vertex $r(T)\in V$ as the root of the underlying tree $T$.
\smallskip

For a node $v \in V$, we denote by $fa(v)$ its father and by $ch(v)$ its set of children. The root has no father and its neighbors are its children. For a node $v$ that is not $r(T)$, its father $fa(v)$ is the first node of the unique path from $v$ to $r(T)$ (with $v \neq fa(v)$). Note that this path is not necessarily directed, i.e. an arc can be traversed from head to tail or the other way. The children set of $v$, denoted by $ch(v)$, is defined as $N_T(v) \setminus \{fa(v)\}$.
\smallskip

% We denote by $ch(v)$ the set of the children of $v$ in $T$ that can be defined recursively as follows: we rename the set of vertices of $V$ into $\{v_1,...,v_{|V|}\}$ such that $v_1=r(T)$ and
% $$
% ch(v_i) = \left\{
%     \begin{array}{ll}
%         N_G(v_i), & \mbox{if } i=1, \\
%         N_G(v_i)\setminus \{v_1,...,v_{i-1}\}, & \mbox{else.}
%     \end{array}
% \right.
% $$
% Hence, the leafs have no children. We denote by $fa(v)$ the unique father of $v\in V$ in $T$, i.e.
% $$fa(v)=\left\{
%     \begin{array}{ll}
%         \emptyset, & \mbox{if } v=r(T), \\
%         N_G(v)\setminus ch(v), & \mbox{else.}
%     \end{array}
% \right.$$

We partition the set $ch(v)$ of children of $v$ into two sets  $ch^+(v)$ and $ch^-(v)$. The set
$ch^+(v)$ contains the children of $v$ in $T$ that leave $v$, i.e. $ch^+(v) = \{u \in ch(v) : (v,u) \in A\}$. The set $ch^-(v)$ contains the children of $v$ in $T$ that arrive in $v$, i.e. $ch^-(v) = \{u \in ch(v) : (u,v) \in A\}$.
\smallskip

For a child $i$ of the root $r(T)$, let $V_i$ denote the nodes accessible from $i$ without passing through $r(T)$. That is, $V_i$ consists of vertex $i$, the children of $i$, the children of these children, etc. Let $T_i$ denote the tree induced by $V_i$ in which $i$ is the root.

% Let $T_i=(V_i,A_i)$ be a sub-tree of $T$ induced by $V_i\subseteq V$ with $i\in V_i$ is the root of $T_i$ where $V_i$ is composed of the root $i=r(T_i)\in ch(r(T))$ and all the descendants of $i$ in $T$ in the undirected version of $T$. Then we can set $V=r(T)\bigcup\limits_{i\in ch(r(T))}V_i$.
% \smallskip

\begin{figure}[h!]
\centering
 \begin{tikzpicture}
    \tikzstyle{level 1}=[sibling distance=40mm]
    \tikzstyle{level 2}=[sibling distance=15mm]
    \tikzstyle{every node}=[circle,draw]

    \node (A) {$v_1$}
        child { node (B) {$v_2$}
                child { node (C) {$v_4$} edge from parent[->]}
                child { node (D) {$v_5$} edge from parent[<-]}
              edge from parent[->]}
        child {
            node (E) {$v_3$}
            child { node (F) {$v_6$} edge from parent[->]}
            child { node (G) {$v_7$} edge from parent[<-]}
            child { node (H) {$v_8$} edge from parent[<-]}
        edge from parent[<-]};

 \end{tikzpicture}
\caption{Example of a directed tree.}
\label{ex_tree}
\end{figure}
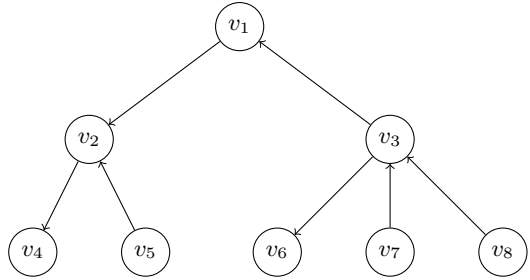

An example of a directed tree $T=(V,A)$ is given in figure \ref{ex_tree}. Let us fix arbitrarily the vertex $v_1$ as the root $r(T)$ of the tree $T$. Then $fa(v_1)=\emptyset$, $ch(v_1)=\{v_2,v_3\}$, $ch^+(v_1)=\{v_2\}$ and $ch^-(v_1)=\{v_3\}$. Regarding vertex $v_3$, $fa(v_3)=\{v_1\}$, $ch(v_3)=\{v_6,v_7,v_8\}$, $ch^+(v_3)=\{v_6\}$ and $ch^-(v_3)=\{v_7,v_8\}$. Also, $ch(v_i)=\emptyset$ for $i\in \{4,5,6,7,8\}$.
Moreover, $V_{v_2}=\{v_2,v_4,v_5\}$ and $T_{v_2}$ is the sub-tree induced by $V_{v_2}$. In the same way, $T_{v_3}$ is the sub-tree induced by $V_{v_3}=\{v_3,v_6,v_7,v_8\}$. Observe that $V=\{r(T)\}\cup_{i\in ch(r(T))}V_i$.

\begin{proposition}\label{Theo:DynProg_Tree}
\textsc{SSG} can be solved using dynamic programming in oriented trees.
\end{proposition}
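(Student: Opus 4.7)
The plan is to develop a tree dynamic program that roots the oriented tree $T=(V,A)$ at an arbitrary vertex $r(T)$ and that computes, for each subtree $T_v$, the set of weights achievable by a valid partial solution on $V_v$. Since whether $v$ itself belongs to the chosen set $S$ affects both the digraph constraints internal to $T_v$ and the constraints that will bind $T_v$ to the rest of the tree when merged through $v$'s father, I would index the DP by two states: one where $v \notin S$ and one where $v \in S$. Formally, for each $v \in V$, each integer $b$ with $0 \le b \le B$, and each $\epsilon \in \{0,1\}$, let $f(v,b,\epsilon)=1$ iff there exists $S \subseteq V_v$ with $w(S)=b$ satisfying the digraph constraints \eqref{dc} restricted to $T_v$ and such that $v \in S$ if and only if $\epsilon = 1$.

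For a leaf $v$ the base case is immediate: $f(v,0,0)=1$ and $f(v,w(v),1)=1$, with all other entries $0$. For an internal vertex $v$, I would fill the table by first recursively computing $f(u,\cdot,\cdot)$ for every $u \in ch(v)$ and then combining the children one at a time via the standard knapsack-on-tree convolution, using the following case analysis. If $v \notin S$, then every $u \in ch^-(v)$ (for which $(u,v)\in A$) must satisfy $u \notin S$, since otherwise $v$ would be forced into $S$; on the other hand every $u \in ch^+(v)$ (for which $(v,u)\in A$) is free, so its weight contribution may be drawn from either $\{b' : f(u,b',0)=1\}$ or $\{b' : f(u,b',1)=1\}$. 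Symmetrically, if $v \in S$, then every $u \in ch^+(v)$ must also belong to $S$ by \eqref{dc}, while every $u \in ch^-(v)$ is free. Processing the children successively and convolving the weight sets (adding $w(v)$ at the end when $\epsilon=1$) produces $f(v,\cdot,0)$ and $f(v,\cdot,1)$.

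The optimal value of \textsc{SSG} is then the largest $b \le B$ such that $f(r(T),b,0)=1$ or $f(r(T),b,1)=1$, and a corresponding set $S$ is recovered by back-tracking the DP choices in the usual way. The overall running time is $O(nB^2)$ table updates, which drops to $O(nB)$ with a careful amortised analysis as in the classical tree knapsack DP; in either case it is pseudo-polynomial, matching the weak \textbf{NP}-hardness established in Proposition \ref{Prop-NPC}.

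The main obstacle to making this work is correctly handling the fact that, in an oriented (as opposed to in-rooted or out-rooted) tree, both $ch^+(v)$ and $ch^-(v)$ may be non-empty at the same vertex, so children cannot be treated uniformly. Once the two child sets are distinguished and the two states $\epsilon \in \{0,1\}$ are carried through every combination step, the recurrence becomes essentially a direction-sensitive variant of the classical tree knapsack DP, and its correctness follows by a routine induction on the depth of $T$.
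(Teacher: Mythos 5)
Your proposal is correct and is essentially the paper's own proof: your $f(v,\cdot,1)$ and $f(v,\cdot,0)$ are exactly the paper's booleans $R^+(T_v,\cdot)$ and $R^-(T_v,\cdot)$, with the same direction-sensitive case analysis (out-children forced in when $v\in S$, in-children forced out when $v\notin S$, the others free) and the same $O(nB^2)$ child-by-child knapsack convolution. No gaps.
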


\begin{proof}
Let $I=(T,w,B)$ be an instance of \textsc{SSG} where $T=(V,A)$ is a directed tree with a root $r(T)\in V$. Given an integer $b \in \{0,...,B\}$, let $R^+(T,b)$ ($R^-(T,b)$ resp.) be the boolean defined by $R^+(T,b)=True$ ($R^-(T,b)=True$ resp.) if and only if there exists $S\subseteq V$  satisfying \eqref{dc} in the tree $T$ with $r(T)\in S$ ($r(T)\notin S$ resp.) and $w(S)=b$. Let $R(T,b)=R^+(T,b)\vee R^-(T,b)$. Then $R(T,b)$ is $True$ if and only if there exists $S\subseteq V$ satisfying \eqref{dc} and $w(S)=b\leq B$, so $S$ also satisfies \eqref{bd}. Hence it is feasible for SSG with the weight $w(S)=b$.
We define $R^+(T,b)$ and $R^-(T,b)$ recursively as follows\footnote{The formula "$==$" is the boolean test of equality.}:
\smallskip

%We denote by $r(T)\in V$ the root of the tree $T$ defined by
% \begin{itemize}
%  \item If $T$ is an out-rooted tree then $r(T)$ is the unique sink of $T$.
%  \item If $T$ is an in-rooted tree then $r(T)$ is the unique source of $T$.
%  \item If $T$ is neither out-rooted, nor in-rooted then let us fix any vertex $r(T)\in V$ as a root of the underlying tree $T$.
% %  Let $V_i$ for $i\geq 0$ be the set of vertices of distance $i$ from $r(T)$ where $T$ is considered as an undirected tree. So, $V_0=\{r(T)\}$, $V_1=N_T(r(T))$ and so on. For any vertex $v\in V_i$ with $i\geq 0$, the children of $v$ in $T$ denoted $ch_r(v)$, are the subset of vertices defined by $N_T(v)\cap V_{i+1}$ while the father of $v$ for $i>0$, denoted $ fa_r(v)$, is the unique vertex in $N_T(v)\cap V_{i-1}$.
% %
%  %$r(T)$ can be the radius of $T$ representing the middle vertex of the longest chain in $T$\footnote{Note for general trees that any vertex can be a root. We define a root as the radius in order to have a balanced tree.}.
% \end{itemize}
%Let $T$ be a tree with a root $r(T)$.

% The boolean $R(T,b)$, which is $True$ if and only if there is $S\subseteq V$  satisfying \eqref{dc} in the tree $T$ and $w(S)=b\leq B$, is recursively defined as follows.
% \smallskip

$$
R^+(T,b) = \left\{
    \begin{array}{ll}
        b==w(r(T)), & \mbox{when } V=\{r(T)\}, \\
        \bigwedge\limits_{k\in ch^+(r(T))} R^+(T_k,a_k) \bigwedge\limits_{\ell\in ch^-(r(T))}R(T_\ell,b_\ell), &  \mbox{when } |V|>1 \mbox{ and where }\\
        & a_k\geq 0, \forall k\in ch^+(r(T)),\\
        & b_\ell\geq 0, \forall \ell\in ch^-(r(T)) \mbox{ and }\\
        & \sum\limits_{k\in ch^+(r(T))}a_k+\sum\limits_{\ell\in ch^-(r(T))}b_\ell=b-w(r(T))
    \end{array}
\right.
$$

$$
R^-(T,b) = \left\{
    \begin{array}{ll}
        b==0, & \mbox{when } |V|=1, \\
        \bigwedge\limits_{k\in ch^-(r(T))} R^-(T_k,c_k) \bigwedge\limits_{\ell\in ch^+(r(T))}R(T_\ell,d_\ell), &  \mbox{when } |V|>1 \mbox{ and where }\\
        & c_k\geq 0, \forall k\in ch^-(r(T)),\\
        & d_\ell\geq 0, \forall \ell\in ch^+(r(T)) \mbox{ and }\\
        & \sum\limits_{k\in ch^+(r(T))}c_k+\sum\limits_{\ell\in ch^-(r(T))}d_\ell=b
    \end{array}
\right.
$$

% $$
% R(T,b) = \left\{
%     \begin{array}{ll}
%         True, & \mbox{if } b=0, \\
%         True, & \mbox{if } |V|=1 \mbox{ with } V=\{r(T)\}\mbox{and } b=w(r(T)),\\
%         R^+(T,b)\vee R^-(T,b), &  \mbox{if } ch(r(T))\neq \emptyset \mbox{ and } V=\{r(T)\}\bigcup\limits_{i\in ch(r(T))}V_i,\\
%         False, & \mbox{else.}
%     \end{array}
% \right.
% $$
%
% where
%
% $R^+(T,b)=\bigwedge\limits_{k\in ch^+(r(T))} R^+(T_k,a_k) \bigwedge\limits_{i\in I}R(T_i,b_i)$ where $a_k\geq 0$, $\forall k\in ch^+(r(T))$ and $b_i\geq 0$, $\forall i\in I$ for some $I\subseteq ch^-(r(T))$ such that $\sum\limits_{k\in ch^+(r(T))}a_k+\sum\limits_{i\in I}b_i=b-w(r(T))$.
% \medskip
%
% $R^-(T,b)=\bigwedge\limits_{k\in ch^-(r(T))}R^-(T_k,c_k) \bigwedge\limits_{j\in J}R(T_j,d_j)$ where $c_k\geq 0$, $\forall k\in ch^-(r(T))$  and $d_j\geq 0$, $\forall j\in J$ for some $J\subseteq ch^+(r(T))$ such that $\sum\limits_{k\in ch^-(r(T))}c_k+\sum\limits_{j\in J}d_j=b$.
% \smallskip

and $R(T,b)=R^+(T,b)\vee R^-(T,b)$.
\smallskip

Let us prove that $R$ is well-defined or equivalently that $R^+$ and $R^-$ are well-defined by induction on $|V|$.
If $|V|=1$ then  $V=\{r(T)\}$. In this case, the unique feasible solution for $R^-$ is the empty set, so $R^-(T,0)=True$ and otherwise $R^-(T,b)=false$ for $b\neq 0$ and $b\leq B$. Regarding $R^+$,   the unique solution is reduced to the vertex $\{r(T)\}$, then $R^+(T,w(r(T)))=True$ and $R^+(T,b)=false$ for $b\neq w(r(T))$ and $b\leq B$.
\smallskip

Now, assume that $|V|\geq 2$. Then $r(T)$ has at least one child.
Suppose that $R(T,b)=True$, i.e. $R^+(T,b)=True$ or $R^-(T,b)=True$
for some $b>0$ (the case $b=0$ corresponds to the empty solution).
%Then either $R^+(T,b)=True$ or $R^-(T,b)=True$.
\smallskip

\begin{itemize}
 \item If $R^+(T,b)=True$ then there exists $S\subseteq V$ satisfying \eqref{dc} in $T$ with $r(T)\in S$ and $w(S)=b\leq B$. It follows that $k=r(T_k)\in S$ for all $k\in ch^+(r(T))$ because of \eqref{dc}. In addition, $S\cap V_k$ necessarily satisfies   \eqref{dc} in the sub-tree $T_k$. By setting $a_k=w(S\cap V_k)$, we conclude that $R^+(T_k,a_k)=True$ for all $k\in ch^+(r(T))$. Moreover, for all $\ell\in ch^-(r(T))$,
$S\cap V_\ell$ satisfies \eqref{dc} in $T_\ell$. By setting $b_\ell=w(S\cap V_\ell)$, we get that $R(T_\ell,b_\ell)=True$ for all $\ell\in ch^-(r(T))$. Finally,  $S=\{r(T)\}\bigcup_{k\in ch^+(r(T))}(S\cap V_k)\bigcup_{\ell\in ch^-(r(T))}(S\cap V_\ell)$, so $w(S)=w(r(T))+\sum\limits_{k\in ch^+(r(T))}a_k+\sum\limits_{\ell\in ch^-(r(T))}b_\ell=b$ which is consistent with the definition of $R^+$.
\smallskip

 \item If $R^-(T,b)=True$ then there is $S\subseteq V$ satisfying \eqref{dc} in $T$ with $r(T)\notin S$ and $w(S)=b\leq B$. Hence, for all $k\in ch^-(r(T))$, $k\notin S\cap V_k$ because of \eqref{dc}. Using the fact that $S\cap V_k$ satisfies \eqref{dc} in the sub-tree $T_k$, it follows that $R^-(T_k,c_k)=True$ where $c_k=w(S\cap V_k)$ for all $k\in ch^-(r(T))$.
Moreover, for all $\ell\in ch^+(r(T))$,
 $S\cap V_\ell$ satisfies \eqref{dc} in the sub-tree $T_\ell$, so $R(T_\ell,w(S\cap V_\ell))=True$.
 %By setting $w(S\cap V_i)=c_i$, we get that $R(T_i,c_i)=True$ for all $i \in I$ and .
 The result follows by setting $d_\ell=w(S\cap V_\ell)$ for all $\ell \in ch^+(r(T))$.
\end{itemize}
\medskip

Conversely, we denote by $S_i\subseteq V_i$ a solution satisfying $R(T_i,w(S_i))=True$  for $i \in ch(r(T))$. %Without loss of generality, we assume that there is a unique solution $S_i$ per tree $T_i$ when $R(T_i,w(S_i))=True$.
\smallskip

\begin{itemize}
 \item If for all $k\in ch^+(r(T))$, $R^+(T_k,a_k)=True$ with $a_k\geq 0$,
 we know that for every $\ell \in ch^-(r(T))$, there exists $b_\ell \geq 0$ such that $R(T_\ell,b_\ell)=True$ (in the worst case, choose $b_\ell=0$ because $R^-(T_\ell,0)$ is always $True$); then we have
 $\{r(T)\}\bigcup_{k\in ch^+(r(T))}S_k \bigcup_{i\in I}S_i$ satisfies \eqref{dc} in $T$. By setting $b=w(r(T))+\sum_{k\in ch^+(r(T))}a_k+\sum_{\ell\in ch^-(r(T))}a_\ell$, we get that $R^+(T,b)=True$.  Otherwise (if there exists $k\in ch^+(r(T))$ such that $R^+(T_k,a_k)=False$, $\forall a_k\geq 0$), we get that  $R^+(T,b)=False$ for every $b\geq 0$ because of \eqref{dc}. Indeed, the root $r(T)$ can not belong to a feasible solution in $T$ when it has a child $k\in ch^+(r(T))$ which is not in this solution.
\smallskip

\item Moreover, if for all $k\in ch^-(r(T))$, $R^-(T_k,c_k)=True$ with $c_k\geq 0$, we know that for every $\ell \in ch^+(r(T))$, there exists $d_\ell \geq 0$ such that $R(T_\ell,d_\ell)=True$ (in the worst case, choose $d_\ell=0$ because $R^-(T_\ell,0)$ is always $True$); then we have
$\{r(T)\}\bigcup_{k\in ch^-(r(T))}S_k \bigcup_{\ell\in ch^+(r(T))}S_\ell$ satisfies \eqref{dc} in $T$. Let $b=\sum_{k\in ch^-(r(T))}c_k+\sum_{\ell\in ch^+(r(T))}d_\ell$. Then $R^-(T,b)=True$.  Otherwise, if there exists $k\in ch^-(r(T))$ such that $R^-(T_k,a_k)=False$, $\forall a_k\geq 0$, then there is no solution $S_k$ satisfying \eqref{dc} in $T_k$ with $w(S_k)=a_k$ and such that $k\notin S_k$. Hence, for every solution $S$ of $T$, $S$ must contain $k$, and by \eqref{dc}, it must contain $r(T)$, so $R^-(T,b)=False$ for every $b>0$.
\end{itemize}
\smallskip

The value of $R$ can easily be deduced from $R^+$ and $R^-$. The induction is proved.
\bigskip

$R$ can be computed in $O(nB^2)$. Indeed, we can construct two tables, one for $R^+$ and the other one for $R^-$. In both tables, the columns are valued by the integers $0,1,...,B$ and the lines contain sub-trees constructed as follows. We start by adding, in both tables, one line per leaf of $T$ and for each graph $T_\ell$ induced by leaf $\ell$, set $R^-(T_\ell,b)=True$ if and only if $b=0$, and $R^+(T_\ell,b)=True$ if and only if $b=w(\ell)$.
Then, add new lines containing new sub-trees with the following algorithm.

\begin{enumerate}
 \item If there exist two sub-trees $T_i$ and $T_{i'}$ induced by $V_i$ and $V_i'$  respectively with $fa(i)=fa(i')$ whose lines have already been created in the tables then
 \begin{enumerate}
  \item \label{1a}
  If there exists another child $i''\neq i,i'$ such that $fa(i'')=fa(i)=fa(i')$ then choose $i$ and $i'$ such that both are either in $ch^+(fa(i))$ or in $ch^-(fa(i))$ and add a new line associated with the forest $T_{i,i'}$ induced by $V_i\cup V_i'$.
   \begin{enumerate}

     \item If $i,i'\in ch^+(fa(i))$ then
    \begin{itemize}
     \item $R^+(T_{i,i'},b)=True$ for $b\in \{0,...,B\}$ if and only if there exist $b_i,b_{i'}\in \{0,...,b\}$ such that $b=b_i+b_{i'}$ and $R^+(T_{i},b_i)=R^+(T_{i'},b_{i'})=True$.
     \item $R^-(T_{i,i'},b)=True$ for $b\in \{0,...,B\}$ if and only if there exist $b_i,b_{i'}\in \{0,...,b\}$ such that $b=b_i+b_{i'}$ and $R(T_{i},b_i)=R(T_{i'},b_{i'})=True$.
     %or $R^-(T_{i},b_i)=R(T_{i'},b_{i'})=True$ or $R(T_{i},b_i)=R^-(T_{i'},b_{i'})=True$.
    \end{itemize}

   \item If $i,i'\in ch^-(fa(i))$ then
    \begin{itemize}
     \item $R^+(T_{i,i'},b)=True$ for $b\in \{0,...,B\}$ if and only if there exist $b_i,b_{i'}\in \{0,...,b\}$ such that $b=b_i+b_{i'}$ and $R(T_{i},b_i)=R(T_{i'},b_{i'})=True$.
     \item $R^-(T_{i,i'},b)=True$ for $b\in \{0,...,B\}$ if and only if there exist $b_i,b_{i'}\in \{0,...,b\}$ such that $b=b_i+b_{i'}$ and $R^-(T_{i},b_i)=R^-(T_{i'},b_{i'})=True$.
          %or $R^+(T_{i},b_i)=R(T_{i'},b_{i'})=True$ or $R(T_{i},b_i)=R^+(T_{i'},b_{i'})=True$.
    \end{itemize}

   \end{enumerate}

  \item Else (there is no other child $i''\neq i,i'$ with the same father $fa(i'')=fa(i)=fa(i')$), add a new line associated with the tree $T_{fa(i)}$ induced by $V_i\cup V_i'\cup \{fa(i)\}$.
    \begin{enumerate}
     \item If $i,i'\in ch^+(fa(i))$ then
     \begin{itemize}
     \item $R^+(T_{fa(i)},b)=True$ for $b\in \{0,...,B\}$ if and only if there exist $b_i,b_{i'}\in \{0,...,b\}$ such that $b=b_i+b_{i'}+w(fa(i))$ and $R^+(T_{i},b_i)=R^+(T_{i'},b_{i'})=True$.
     \item $R^-(T_{fa(i)},b)=True$ for $b\in \{0,...,B\}$ if and only if there exist $b_i,b_{i'}\in \{0,...,b\}$ such that $b=b_i+b_{i'}$ and $R(T_{i},b_i)=R(T_{i'},b_{i'})=True$.
     \end{itemize}

     \item If $i,i'\in ch^-(fa(i))$ then
     \begin{itemize}
      \item $R^+(T_{fa(i)},b)=True$ for $b\in \{0,...,B\}$ if and only if there exist $b_i,b_{i'}\in \{0,...,b\}$ such that $b=b_i+b_{i'}+w(fa(i))$ and $R(T_{i},b_i)=R(T_{i'},b_{i'})=True$.
     \item $R^-(T_{fa(i)},b)=True$ for $b\in \{0,...,B\}$ if and only if there exist $b_i,b_{i'}\in \{0,...,b\}$ such that $b=b_i+b_{i'}$ and $R^-(T_{i},b_i)=R^-(T_{i'},b_{i'})=True$.
     \end{itemize}

     \item \label{da} If $i\in ch^+(fa(i))$ and $i'\in ch^-(fa(i))$ then
      \begin{itemize}
      \item $R^+(T_{fa(i)},b)=True$ for $b\in \{0,...,B\}$ if and only if there exist $b_i,b_{i'}\in \{0,...,b\}$ such that $b=b_i+b_{i'}+w(fa(i))$ and $R^+(T_{i},b_i)=R(T_{i'},b_{i'})=True$.
     \item $R^-(T_{fa(i)},b)=True$ for $b\in \{0,...,B\}$ if and only if there exist $b_i,b_{i'}\in \{0,...,b\}$ such that $b=b_i+b_{i'}$ and $R(T_{i},b_i)=R^-(T_{i'},b_{i'})=True$.
      \end{itemize}

     \item If $i'\in ch^+(fa(i))$ and $i\in ch^-(fa(i))$ then permute $i$ and $i'$ and go to \ref{da}.

    \end{enumerate}

 \end{enumerate}
 \item Else, there exists a sub-graph  $T_i$ induced by $V_i$ whose line has already been created in the tables and such that there is no other child $i'\neq i$ with the same father $fa(i)$, and the sub-tree whose root is the father  $fa(i)$ (i.e. the sub-tree induced by $V_{fa(i)}$) is not  created yet in the tables. Then add a new line containing the graph $T_{fa(i)}$ induced by $V_i\cup \{fa(i)\}$ and set:
 \begin{itemize}
      \item $R^-(T_{fa(i)},b)=True$ if and only if $R^-(T_i,b)=True$ when $i\in ch^-(fa(i))$,  $R(T_i,b)=True$ otherwise.
      \item $R^+(T_{fa(i)},b)=True$ if and only if $R^+(T_i,b-w(fa(i)))=True$ when $i\in ch^-(fa(i))$,  $R(T_i,b-w(fa(i)))=True$ otherwise.
     \end{itemize}

\end{enumerate}
\smallskip

% \noindent Note that $T_i$ and $T_i'$ may be out-rooted forests and not necessarily out-rooted trees. However all the connected components of such graphs have the same father in $T$, so the previous definition holds.
% \smallskip

In case \ref{1a}, $T_{i,i'}$ is a forest and not a tree, but we use the boolean $R(T_{i,i'},b)$ which is defined on a tree $T$ and a nonnegative integer $b$. However this has no incidence on the construction of the tables $R^+$ and $R^-$. Indeed, we can transform the forest $T_{i,i'}$ into a tree $T_{ii'}$ by adding a fictitious vertex $fv$. If in $T$ the arcs go from $fa(i)$ to $i$ and $i'$ then add in $T_{ii'}$ the arcs $(fa(i),fv)$, $(fv,i)$, $(fv,i')$, else add these arcs in the opposite direction.
\smallskip

%  On the one hand, $R^-(T_{fa(i)},b)=True$ if and only if there exist $b_i,b_{i'}\in \{0,...,b\}$ such that:
%
%    \begin{itemize}
%     \item if $fa(i)=fa(i')$ then $b=b_i+b_{i'}$ and
%      \begin{itemize}
%       \item  if $i\in ch^-(fa(i))$ then  $R^-(T_{i},b_i)=True$ else $R(T_{i},b_i)=True$
%       \item  if $i'\in ch^-(fa(i'))$ then  $R^-(T_{i'},b_{i'})=True$ else $R(T_{i'},b_{i'})=True$
%      \end{itemize}
%     \item  if $fa(i)=i'$ then $b=b_i$ and
%      \begin{itemize}
%       \item if $i\in ch^+(fa(i))$ then  $R^+(T_{i},b_i)=True$ else $R(T_{i},b_i)=True$
%      \end{itemize}
%    \end{itemize}
%
%
%
%    On the other hand, $R^+(T_{fa(i)},b)=True$ if and only if there exist $b_i,b_{i'}\in \{0,...,b\}$ such that:
%
%   \begin{itemize}
%    \item if $fa(i)=fa(i')$ then $b=b_i+b_{i'}+w(fa(i))$ and
%     \begin{itemize}
%      \item  if $i\in ch^+(fa(i))$ then  $R^+(T_{i},b_i)=True$ else $R(T_{i},b_i)=True$
%      \item  if $i'\in ch^+(fa(i'))$ then  $R^+(T_{i'},b_{i'})=True$ else $R(T_{i'},b_{i'})=True$
%     \end{itemize}
%    \item  if $fa(i)=i'$ then $b=b_i+w(i')$ and
%     \begin{itemize}
%      \item if $i\in ch^-(fa(i))$ then  $R^-(T_{i},b_i)=True$ else $R(T_{i},b_i)=True$
%     \end{itemize}
%   \end{itemize}

 The filling of each line can be realized in $O(B^2)$ because of the comparison of each value of line $T_i$ with each value of line $T_{i'}$ in the worst case. We continue to create new lines in both tables in parallel until getting $T$, so we create at most $2n$ lines.
 Then the answer to SSG is yes if and only if $R(T,B)=True$ (i.e. if $R^-(T,B)=True$ or $R^+(T,B)=True$). In case of answer yes, the solution can be obtained by creating another table containing a feasible solution per cell valued True. In the lines containing a leaf, the solution is the empty set if the corresponding $R^-$ is $True$ or the leaf if the corresponding $R^+$ is $True$. Otherwise, the solution of each current $True$ cell in the other lines is the union of the solutions of the sub-trees that compose $T$ and we add $\{r(T)\}$ if and only if the corresponding $R^+$ is $True$ and $T$ is not the union of a subset of children of $r(T)$.
 % is $S_i\cup S_{i'}$ if the corresponding $R^-$ is true or $S_i\cup S_{i'}\cup \{fa(i)\}$ if the corresponding $R^+$ is true where $S_i$ and $S_{i'}$ are the solutions of the sub-trees $T_i$ and $T_{i'}$ respectively.
% %in the first table and by adding the anti-root of the corresponding line.
% % as follows. Starting by the cell $R(V,B)$, if $B\neq 0$ then we add $r_V$ in the solution and
% % we indentifie the cells $R(T_i,b_i)=True$ and $R(T_{i'},b_{b'}=True$ such that $B=b_i+b_{i'}$ and we add the roots of $T_i$ and  that allowed us to find

\hfill\qed

\end{proof}

\begin{remark}
 If $T$ is an out-rooted (in-rooted resp.) tree, then it is easy to define the root $r(T)$ as the unique sink (source resp.) of $T$. The cases of out-rooted and in-rooted trees were previously treated in \cite{johnson1983knapsacks} where a dynamic programming algorithm was proposed. Proposition \ref{Theo:DynProg_Tree} is a generalization to any directed tree.
\end{remark}

\begin{remark}\label{rem:forest}
 Proposition \ref{Theo:DynProg_Tree} also holds in the class of forests.
\end{remark}

\begin{proposition}\label{dynprog_lazy}
\textsc{Maximal SSG} can be solved using dynamic programming in oriented trees.
\end{proposition}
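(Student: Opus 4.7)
The plan is to extend the dynamic programming of Proposition \ref{Theo:DynProg_Tree} by tracking, in addition to the booleans $R^+$ and $R^-$, the information needed to enforce the maximality constraint \eqref{busy1}. The key simplification comes from Lemma \ref{lem_local} and Remark \ref{rem:lem_local}: a set $S$ satisfying \eqref{dc} and \eqref{bd} is feasible for \textsc{Maximal SSG} if and only if every sink of $G[V\setminus S]$ has weight strictly greater than $B-w(S)$. After fixing a root $r(T)$, I therefore augment the table entries with data describing the minimum-weight sink of $G[V_v\setminus S]$ arising inside each subtree $T_v$.

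A useful structural observation is that for any $u\in V_v$ with $u\neq v$, every out-neighbor of $u$ in $T$ lies in $V_v$, because the only arc of $T$ crossing the boundary of $V_v$ is the one between $v$ and $fa(v)$. Hence, whether such an internal $u$ is a sink of $G[V\setminus S]$ is already determined by $S\cap V_v$, while for $v$ itself this status may still depend on $fa(v)$ (only when $(v,fa(v))\in A$). I therefore define, for each subtree $T_v$, each weight $b\in\{0,\dots,B\}$, each root status $\varepsilon\in\{+,-\}$, and each candidate minimum sink weight $m\in\{1,\dots,B\}\cup\{+\infty\}$, a boolean $M^{\varepsilon}_s(T_v,b,m)$ which is true iff there exists $S\subseteq V_v$ satisfying \eqref{dc} in $T_v$, with $v\in S$ iff $\varepsilon=+$, with $w(S)=b$, and with $m$ equal to the minimum weight of a sink of $G[V_v\setminus S]$ lying strictly below $v$; when $\varepsilon=-$, an extra flag $s\in\{0,1\}$ records whether $ch^+(v)\subseteq S$, i.e.\ whether $v$ is a local sink.

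The recurrences mirror those of Proposition \ref{Theo:DynProg_Tree}: when combining two children's tables or attaching a subtree to its parent, weights $b$ are added and the fields $m$ are merged by taking minima. The only extra step occurs at the moment we attach a child $k$ to its parent $fa(k)$: if $k\notin S$, $k$ is a local sink (flag $s=1$), and, depending on the orientation of the arc between $k$ and $fa(k)$, the external out-neighbor of $k$ turns out to be in $S$, then $k$ becomes a new sink of $G[V\setminus S]$ whose weight $w(k)$ must be merged into the parent's $m$. The answer at $T_{r(T)}$ is then the smallest $b\leq B$ such that some entry $M^{\varepsilon}_s(T_{r(T)},b,m)$ is true with $m>B-b$, and, if $r(T)\notin S$ and $r(T)$ is itself a local sink, also with $w(r(T))>B-b$. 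Each of the $O(nB^2)$ entries is computed in $O(B^2)$, giving a pseudo-polynomial algorithm. The main obstacle will be the careful case analysis at each junction, in particular distinguishing the four possible orientations of the arc between parent and child and correctly propagating the local-sink flag and the minimum-sink-weight field in each case; once this bookkeeping is in place, correctness and polynomial running time follow by induction on the subtree size, as in Proposition \ref{Theo:DynProg_Tree}.
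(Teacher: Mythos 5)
Your proposal is correct in substance, but it takes a genuinely different route from the paper. The paper does \emph{not} augment the DP state at all: it keeps the SSG tables $R^{+},R^{-}$ of Proposition~\ref{Theo:DynProg_Tree} as a black box, orders the vertices $v_1,\dots,v_n$ greedily so that $v_i$ is a minimum-weight sink of $G[V\setminus\{v_1,\dots,v_{i-1}\}]$, and partitions the feasible maximal solutions into classes $\mathcal{S}_k=\{S:\{v_1,\dots,v_{k-1}\}\subseteq S,\ v_k\notin S\}$. Within class $k$ the maximality constraint collapses to a single scalar condition, namely that the weight of $S$ restricted to the residual forest $T_k$ lies in the window $(B_{k+1},B_k]$ with $B_k=B-\sum_{i<k}w(v_i)$; so one just runs the SSG dynamic program on each forest $T_k$ and takes the best window hit over $k$. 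Your approach instead bakes maximality into the recursion by carrying the minimum weight of an already-resolved sink of $G[V_v\setminus S]$ plus a pending flag for the subtree root, and checks $m>B-w(S)$ at the very end. Both are sound: your key structural observation (that the sink status of every non-root vertex of $V_v$ is determined by $S\cap V_v$, the root alone being pending) is exactly what makes the augmentation well-defined, and Remark~\ref{rem:lem_local} justifies reducing maximality to the minimum sink weight. The trade-off is that the paper's reduction reuses Proposition~\ref{Theo:DynProg_Tree} verbatim (on forests, via Remark~\ref{rem:forest}) at the cost of $n$ DP runs and a somewhat delicate argument that $v_k$ is the lightest sink of $G-S$ for $S\in\mathcal{S}_k$, whereas yours is a single self-contained DP with heavier state and case analysis. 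One point to tighten in your write-up: if you literally mirror the sibling-merging of Proposition~\ref{Theo:DynProg_Tree}, a forest $T_{i,i'}$ has several roots with pending sink status, so you must either attach children to the parent one at a time, or observe that the forest's superscript $\pm$ already fixes the parent's membership and therefore lets you resolve each sibling root's sink status (and fold $w(i)$ into $m$) at merge time; also $m$ should be allowed to take the value $0$ since weights are only assumed nonnegative.
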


\begin{proof}
 Let $I=(T,w,B)$ be an instance of \textsc{Maximal SSG} where $T=(V,A)$ is a tree. We rename the vertices $v_1,\dots,v_n$ of $V$ in such a way that for all $i\in \{1,...,n\}$, $v_i$ is the sink of minimum weight among the sinks of the subgraph induced by the set of vertices $V\setminus \{v_1,...,v_{i-1}\}$. Hence, the root is necessarily $v_n$.
  Wlog., assume $w(V)>B$, since otherwise $V$ is an optimal solution.
 \smallskip

 Let $k\geq 1$.  We denote by $T_k$ the forest induced by $\{v_{k+1},\dots,v_n\}\setminus asc_G(v_k)$. An example of such construction is given in Figure \ref{fig10}.
 \medskip

\begin{figure}[h!]
\centering
 \begin{tikzpicture}[edge from parent/.style={draw,-latex}]
    \tikzstyle{level 1}=[sibling distance=40mm]
    \tikzstyle{level 2}=[sibling distance=15mm]
    \tikzstyle{every node}=[circle,draw]

    \node (A) {$v_8$}
        child { node (B) {$v_5$}
                child { node (C) {$v_4$} }
                child { node (D) {$v_2$} }
              }
        child {
            node (E) {$v_7$}
            child { node (F) {$v_6$} }
            child { node (G) {$v_3$} }
            child { node (H) {$v_1$} }
        };

   \node(7) at (8,-1.5) {$v_7$};
   \node(6) at (6.5,-3) {$v_6$} ;
   \draw[-latex] (7) -- (6);

   \begin{scope}[nodes = {left = 8pt},nodes = {draw = none}]
       \node at (A) {$3$};
       \node at (B) {$1$};
       \node at (C) {$2$};
       \node at (D) {$1$};
       \node at (E) {$2$};
       \node at (F) {$3$};
       \node at (G) {$2$};
       \node at (H) {$1$};

       \node at (7) {$2$};
       \node at (6) {$3$};

      \node [below of=F] {A tree $T$ with sorted vertices};

      \node [below of=6] {~~\hspace{2cm}~~The forest $T_4$};
   \end{scope}
\end{tikzpicture}
\caption{An example of construction of the forest $T_k$ for $k=4$.}
\label{fig10}
\end{figure}
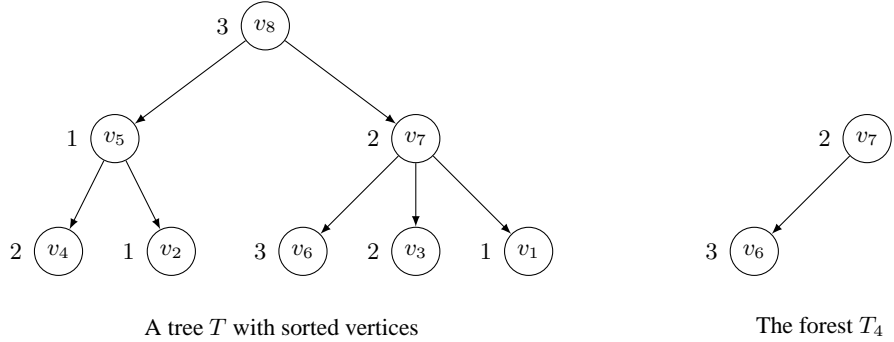

Let $\mathcal{S}$ be the set of feasible solutions of \textsc{Maximal SSG} and $\mathcal{S}_k\subseteq\mathcal{S}$ be the subset satisfying:
\begin{equation}\label{prop_lazy}
  \forall S\in \mathcal{S}_k,\qquad \{v_1,\dots,v_{k-1}\}\subseteq S \text{ and } v_k\notin S
\end{equation}
for $k\in \{1,\dots,n\}$ where $\mathcal{S}_1$ is reduced to subsets $S$ such that $v_1\notin S$. It is obvious that $\{\mathcal{S}_1,\dots,\mathcal{S}_n\}$ is a partition of $\mathcal{S}$ where some parts may be empty.
 \smallskip

Let $k\geq 1$ and let $I_k=(T_k,w,B_k)$ be an instance of \textsc{SSG} where $T_k$ is the forest defined above and $B_k=B-\sum_{i=1}^{k-1}w(v_i)$ where $B_1=B$. Then we show there exists $S\in \mathcal{S}_k$ if and only if $R(T_k,b)=True$ for some
$b\in (B_{k+1},B_k]$ where $R(T_k,b)$ is the boolean table defined in the proof of Proposition \ref{Theo:DynProg_Tree}, i.e. the instance $I_k$ admits a feasible solution for SSG with weight $b\in (B_{k+1},B_k]$.

\smallskip

First, assume there exists $S\in \mathcal{S}_k$. Then

   \begin{itemize}
    \item the restriction $S_k$ of $S$ to $T_k$ satisfies \eqref{dc};
    \item using \eqref{prop_lazy}, we know that $v_k\notin S$, so using Property \ref{property1DAG}, $S\setminus \{v_1,\dots,v_{k-1}\}=S_k$. Using $w(S)\leq B$, we conclude that
    $w(S_k)\leq B_k$;
    \item since $S\in \mathcal{S}_k$ and $T$ is a tree (so, a DAG), it follows that $S$ satisfies \eqref{busy}. But $v_k\notin S$ and $S\cup \{v_k\}$ satisfies \eqref{dc}, so it must be $w(S)+w(v_k)>B$. Thus, $w(S_k)>B_k-w(v_k)=B_{k+1}$.
   \end{itemize}
Hence, we may conclude that $R(T_k,w(S_k))=True$ and $B_{k+1}<w(S_k)\leq B_k$.
\smallskip

Conversely, assume that there exists $b\in (B_{k+1},B_k]$ such that $R(T_k,b)=True$. Then there exists $S_k\subseteq T_k$ such that $S_k$ satisfies \eqref{dc} and $w(S_k)=b\leq B_k$. In the initial graph $T$, we have

\begin{itemize}
 \item $S=S_k\cup \{v_1,\dots,v_{k-1}\}$ satisfies \eqref{dc},
 \item $w(S)=w(S_k)+\sum_{i=1}^{k-1}w(v_i)\leq B$,
 \item Let us prove that $S$ satisfies the maximality constraints \eqref{busy}. Using Remark \ref{rem:lem_local}, let $v$ be a sink of $G-S$; so $v=v_i$ for $i\geq k$.
 We have $w(S)+w(v)\geq w(S)+w(v_k)=b+\sum_{i=1}^{k-1}w(v_i)+w(v_k)> B_{k+1}+\sum_{i=1}^{k-1}w(v_i)+w(v_k)=B_k+\sum_{i=1}^{k-1}w(v_i)=B$.
  %Property \ref{property2DAG} affirms that $w(S)+w(v)\geq w(S)+w(v_k)>B$.
\end{itemize}
We conclude that $S=S_k\cup \{v_1,\dots,v_{k-1}\} \in \mathcal{S}_k$.
 \bigskip

Then, the value of the optimal solution of \textsc{Maximal SSG} is $\min_{k\leq n}b_k$ with $b_k=\min\{b\in (B_{k+1},B_k]$ for which $R(T_k,b)=True\}$.\hfill\qed
\end{proof}

\begin{proposition}\label{dynprog_strong}
\textsc{SSGW} can be solved using dynamic programming in in-rooted and out-rooted trees.
\end{proposition}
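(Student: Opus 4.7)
The plan is to dispose of the two cases separately. For an \emph{out-rooted} tree $T$, every non-root vertex $v$ has in-degree exactly $1$, so $N^-_T(v) = \{fa(v)\}$. The weak constraint \eqref{sdc} then becomes ``$fa(v) \in S \Rightarrow v \in S$'', which is precisely the instance of \eqref{dc} coming from the unique arc entering $v$. Since these are the only arcs of $T$, the predicates \eqref{sdc} and \eqref{dc} are equivalent on out-rooted trees, and SSGW coincides with SSG on the same input; Proposition~\ref{Theo:DynProg_Tree} then applies directly.

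For an \emph{in-rooted} tree $T$, I root the underlying tree at its unique sink $r$. For every vertex $v$ the family-tree children $c_1(v), \dots, c_{k_v}(v)$ coincide with $N^-_T(v)$, and I let $T_v$ denote the subtree consisting of $v$ together with all vertices reaching $v$ by a directed path. I propose a bottom-up DP maintaining two Boolean tables $f^{in}_v(b)$ and $f^{out}_v(b)$ for $b \in \{0, \dots, B\}$, indicating whether there exists $S \subseteq V(T_v)$ of weight $b$ satisfying \eqref{sdc} at every vertex of $T_v$ while containing (respectively excluding) $v$. Base case: a leaf $v$ is a source, so $f^{in}_v(w(v)) = f^{out}_v(0) = \mathrm{True}$ and all other entries are $\mathrm{False}$.

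At an internal $v$ with children $c_1, \dots, c_k$, I merge the children tables. When $v \in S$, the weak constraint at $v$ is automatic and each $c_i$ may sit in any valid state, so $f^{in}_v(b)$ arises from a disjunctive convolution, over partitions $b - w(v) = b_1 + \dots + b_k$, of the quantities $f^{in}_{c_i}(b_i) \vee f^{out}_{c_i}(b_i)$. When $v \notin S$, the weak constraint at $v$ is violated iff \emph{all} children lie in $S$, so a valid configuration additionally requires that at least one $c_i$ be placed outside $S$. This ``at least one out'' disjunction is the main technical obstacle: one cannot compute it simply as ``any choice'' $\wedge \neg$``all in'', because for a given total $b$ the partition of $b$ that witnesses ``all in'' and the partition that admits a configuration with a child out may differ. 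I resolve this by processing the children sequentially and maintaining an auxiliary table $P_j(b, \varphi)$ with a Boolean flag $\varphi \in \{0,1\}$ recording whether at least one of $c_1, \dots, c_j$ has already been placed outside $S$; within a fixed partition and per-child state, the flag faithfully tracks the violation condition. Each child adds an $O(B^2)$ convolution, giving $O(nB^2)$ overall.

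The optimum of SSGW is then the largest $b \le B$ with $f^{in}_r(b) \vee f^{out}_r(b) = \mathrm{True}$, and a witness is recovered by standard backtracking through the tables. The correctness argument follows the same inductive scheme as in Proposition~\ref{Theo:DynProg_Tree}, augmented by the flag $\varphi$ to handle the new disjunctive constraint peculiar to the weak digraph setting.
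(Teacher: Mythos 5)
Your proof contains all the right mathematics and follows the same two-case strategy as the paper, but the labels \emph{in-rooted} and \emph{out-rooted} are swapped throughout relative to this paper's conventions (Section~\ref{term}). Here an \emph{in-rooted} tree is one in which every vertex except the unique root has in-degree exactly $1$; hence it is in the \emph{in-rooted} case that $N^-_T(v)=\{fa(v)\}$ for every non-root $v$, that \eqref{sdc} collapses to \eqref{dc}, and that Proposition~\ref{Theo:DynProg_Tree} applies verbatim. An \emph{out-rooted} tree is one in which every vertex except the unique anti-root has out-degree exactly $1$: all arcs point towards the anti-root, which is the unique sink, a vertex's in-neighbours are exactly its tree-children, and its in-degree can be arbitrarily large (see the star with all arcs directed to the centre built in the proof of Proposition~\ref{Prop-NPC}, which the paper explicitly calls an out-rooted tree). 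So your genuine dynamic program is the one needed for out-rooted trees and your reduction to \textsc{SSG} is the one for in-rooted trees --- exactly the opposite of what you wrote. You appear to be using the other common convention, in which an out-tree has its arcs directed away from the root; this paper uses the reverse, so swapping the two names everywhere fixes the write-up.

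With the names corrected, the content matches the paper's proof in structure: one case is dispatched by the equivalence of \eqref{sdc} and \eqref{dc} when all in-degrees are at most one, and the other by a bottom-up dynamic program over the tree rooted at its unique sink. Your treatment of the latter case is in fact more explicit than the paper's single table $P(T,b)$: by keeping separate ``$v\in S$'' and ``$v\notin S$'' entries and carrying the flag $\varphi$ recording whether some child has already been excluded, you correctly enforce that a vertex left outside $S$ must have at least one in-neighbour outside $S$, and you rightly observe that this cannot be obtained by combining two independent convolutions over possibly different partitions of $b$. The base case, the $O(nB^2)$ running time and the extraction of the optimum from the root's tables are all fine.
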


\begin{proof}
 In in-rooted trees, the weak digraph constraints \eqref{sdc} are equivalent to the digraph constraints \eqref{dc}. Hence, \textsc{SSGW} is equivalent to solve \textsc{SSG} and the result holds by Proposition \ref{Theo:DynProg_Tree}.
 \smallskip

 Let us now consider the case of out-rooted trees. Let $P(T,b)$ be a boolean defined on an  out-rooted tree and an integer $b\in \{0,...,B\}$ such that $P(T,b)=True$ if and only if there exists $S\subseteq V$ satisfying the weak digraph constraints \eqref{sdc} with $w(S)=b$. Let $r(T)$ be the anti-root of $T$. Then it is easy to see that $P(T,b)$ is defined recursively as follows.

$$
P(T,b) = \left\{
    \begin{array}{ll}
        False, & \mbox{if } b>w(V), \\
        True, & \mbox{if } b=w(V) \wedge \forall i\in ch(r(T)): P(T_i,w(V_i))=True,\\
        True, &  \mbox{if } \exists I\subseteq ch(r(T)): \forall i\in I: P(T_i,b_i)=True \text{ for some } b_i\in \{0,...,b\}\\
        &  \text{ such that }  \sum_{i\in I} b_i=b \vee \sum_{i\in I} b_i=b-w(r(T)),\\
        False, & \mbox{else.}
    \end{array}
\right.
$$

 \hfill\qed
\end{proof}

The generalization of Proposition \ref{dynprog_strong} to any directed tree is an open problem.
\smallskip

Note that the dynamic programming algorithms presented in Propositions \ref{Theo:DynProg_Tree}, \ref{dynprog_lazy} and \ref{dynprog_strong} can be used to deduce fully polynomial time approximation schemes.

\section{Approximation schemes for \textsc{SSG} and \textsc{Maximal SSG} in DAG}\label{sec:approx}

% In this section, using Lemma \ref{lem}, we can suppose that $G=(V,A)$ is a connected DAG. Since to be a DAG is a hereditary property $\forall G'\subseteq G$, $G'\neq \emptyset$ $G'$ possesses source and sink. We propose in this section two polynomial approximation schemes based on greedy algorithms and exhaustive search for \textsc{SSG} and \textsc{Lazy SSG}. In the first case, source maximizing a criterion while it is sink  minimizing another criterion for \textsc{Lazy SSG} are considered.  Concerning \textsc{SSG}, it is worthwhile to notice that the approximation algorithm  needs to take a subset of nodes at each step. Actually, any greedy algorithm which select just one node with respect some criterion at each iteration does not yield constant approximation.

An instance is a graph $G=(V,A)$, a bound $B$  and a nonnegative weight $w(v)$ for each node $v \in V$.
Using Lemma \ref{lem},  we can suppose that $G$ is a connected DAG. Being a DAG is a hereditary property, so every nonempty subgraph of $G$ possesses a source and a sink. We propose two polynomial approximation schemes for \textsc{SSG} and \textsc{Maximal SSG}, respectively.
They both consist in building a partial solution with an exhaustive search of $k$ nodes ($k$ is a part of the input) and complete it in a greedy manner. In both cases, the time complexity of the algorithm is dominated by the first phase which requires $O(|V|^k|A|)$ elementary operations.

\subsection{{\sc SSG} in DAG}

\begin{algorithm}%[h!]
\caption{}
\label{algogreed3}
%%-----------------------------------------------
\KwData{a DAG $G=(V,A)$, $B$, $w$ and $k$}
\KwResult{a set of nodes $Sol^*$ satisfying (\ref{dc}) and (\ref{bd})}

$Sol^* \gets \emptyset$

\For{{\bf all} $S\subseteq V$ such that $|S|\leq k$}{

%Replace $S$ by its kernel $\kappa(S)$

\If{$w(desc_G(S)) \leq B$}{

$V' \gets V\setminus \left( asc_G(\kappa(S)) \cup desc_G(S)\right) $

$G'\gets G[V']$

$Sol_S \gets desc_G(S)$

\While{$V' \neq \emptyset$}{

Choose a source node $z$ of $G',$ maximizing $w(desc_{G'}(z))$

\If{$w(Sol_S)+w(desc_{G'}(z))\leq B$}{

$Sol_S \gets Sol_S \cup desc_{G'}(z)$

$V' \gets V' \setminus desc_{G'}(z)$

}
\Else{
$V' \gets V' \setminus \{z\}$
}
$G'\gets G[V']$

}

\If{$w(Sol_S) > w(Sol^*)$}{

$Sol^*\gets Sol_S$
}

}
}

\Return  $Sol^*$
\end{algorithm}

Algorithm \ref{algogreed3} consists in  building every possible subset $S$ of $V$ such that $|S| \le k$. The nodes
of $desc_G(S)$ are put in a partial solution $Sol_S$ if $w(desc_G(S)) \le B$. Next, the nodes of
$V\setminus \left( asc_G(\kappa(S)) \cup desc_G(S)\right)$ are considered by nonincreasing marginal contribution
(that is $w(desc_G(z)\setminus Sol_S)=w(desc_{G'}(z))$) and $desc_{G'}(z)$ is added to $Sol_S$ if the budget $B$ is not exceeded, until a feasible solution is obtained. The algorithm finally ouputs the best solution $Sol^*$ that was constructed (the one of maximum weight).

\begin{theorem} \label{PTAS-SSG} Algorithm \ref{algogreed3} is a PTAS for  {\sc SSG}  in DAG.
\end{theorem}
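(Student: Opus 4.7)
The plan is to show that, for $k := \lceil 1/\varepsilon \rceil$, Algorithm~\ref{algogreed3} returns a $(1-\varepsilon)$-approximation for \textsc{SSG} in polynomial time. The running time is immediate: the outer loop iterates over $O(n^k)$ subsets $S\subseteq V$, and for each one the initialization plus the $O(n)$ iterations of the while loop cost $O(n(n+m))$ in total, yielding an overall $O(n^{k+2})$ bound.

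For the approximation guarantee, fix an optimal solution $\opt$ and let $\kappa := \kappa(\opt)$. If $|\kappa|\le k$, then the guess $S=\kappa$ appears in the enumeration and initializes $Sol_S$ at $desc_G(\kappa)=\opt$, so $w(Sol^*)\ge w(\opt)$. Otherwise I would exhibit a specific guess $S^*\subseteq \kappa$ of size $k$ for which the greedy completion outputs $w(Sol_{S^*})\ge (1-\tfrac{1}{k+1})\,w(\opt)$. Order $\kappa=\{u_1,\ldots,u_m\}$ greedily by choosing $u_i$ to maximize $c_i := w\bigl(desc_G(u_i) \cap \opt \setminus \bigcup_{j<i} desc_G(u_j)\bigr)$. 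The sequence $(c_i)$ is nonincreasing and $\sum_i c_i = w(\opt)$, so $c_{k+1}\le w(\opt)/(k+1)$. Set $S^* := \{u_1,\ldots,u_k\}$.

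Since $\kappa$ is an independent set of $G$, so is $S^*$, giving $\kappa(S^*)=S^*$; moreover $desc_G(S^*)\subseteq \opt$ guarantees $w(desc_G(S^*))\le B$, so the inner block of Algorithm~\ref{algogreed3} executes. Because $\opt\cap asc_G(S^*)=\emptyset$ (kernel vertices have no $\opt$-ancestor, by Property~\ref{property1DAG}), one has $\opt\cap V' = \opt\setminus desc_G(S^*)$ where $V':=V\setminus(asc_G(S^*)\cup desc_G(S^*))$. I would then analyze the while loop: if every source considered is accepted, $V'$ is eventually emptied and $Sol_{S^*}\supseteq V\setminus asc_G(S^*)\supseteq \opt$, and we are done. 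Otherwise let $z^*$ denote the first rejected source; the rejection condition $w(Sol)+w(desc_{G'}(z^*))>B$ together with monotonicity of the partial solution yields $w(Sol_{S^*})>B-w(desc_{G'}(z^*))\ge w(\opt)-w(desc_{G'}(z^*))$.

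The main obstacle is to establish $w(desc_{G'}(z^*))\le w(\opt)/(k+1)$. For $z^*\in \opt$ this is essentially immediate: the closure of $\opt$ yields $desc_{G'}(z^*)\subseteq desc_G(z^*)\cap V'\subseteq \opt\setminus desc_G(S^*)$, and since $z^*\in desc_G(u_j)$ for some $j>k$ (as $z^*\notin desc_G(S^*)$), the greedy ordering of $\kappa$ gives $w(desc_G(u_j)\setminus desc_G(S^*))\le c_{k+1}\le w(\opt)/(k+1)$. The delicate subcase is $z^*\notin \opt$: here $z^*$ has all its $G$-in-arcs in $asc_G(S^*)$, and one must exploit that $z^*$ is the maximum-weight source at the moment of rejection and that every $G'$-path from $z^*$ entering $\opt$ must cross some $u_j$ with $j>k$, in order to conclude that the same bound caps $w(desc_{G'}(z^*))$. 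Plugging this in yields $w(Sol^*)\ge (1-\tfrac{1}{k+1})\,w(\opt)\ge (1-\varepsilon)\,w(\opt)$, completing the proof.
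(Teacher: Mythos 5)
Your overall strategy coincides with the paper's: enumerate guesses of size $k$, target the $k$ kernel vertices of the optimum $O$ with largest marginal contribution, and argue that the greedy completion loses at most a $\frac{1}{k+1}$ fraction; the running-time analysis, the case $|\kappa(O)|\le k$, and the bound $c_{k+1}\le w(O)/(k+1)$ are all correct. The gap is in the event you choose to analyze. You condition on the \emph{first rejected source} $z^*$ and need $w(desc_{G'}(z^*))\le w(O)/(k+1)$; the subcase $z^*\notin O$ is not merely delicate, the claimed bound is simply false. Nothing prevents $G$ from containing a source $h\notin O$ with $w(h)>B$: it is the maximum-weight source at the first greedy iteration, it is rejected immediately, so $z^*=h$ and $w(desc_{G'}(z^*))$ is unbounded --- yet this rejection is completely harmless, since the algorithm just discards $h$ and continues. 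Your proposed repair (every $G'$-path from $z^*$ into $O$ crosses some $u_j$ with $j>k$) can at best control $desc_{G'}(z^*)\cap O$ and says nothing about the possibly enormous weight of $desc_{G'}(z^*)\setminus O$, so it cannot close this gap. In short, the first rejection is not the critical event.

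The paper instead analyzes a rejection that is guaranteed to involve a light piece lying entirely inside $O$: take the smallest $j>k$ such that $u_j\in\kappa(O)$ is absent from the \emph{final} $Sol_{S^*}$ (if no such $j$ exists, $Sol_{S^*}\supseteq O$ and you are done). A node leaves $V'$ only by being absorbed into $Sol_{S^*}$ or by being the chosen source at a rejected iteration, so $u_j$ was at some point the chosen source and was rejected; at that moment $w(Sol_{S^*})>B-w(desc_{G'}(u_j))$. Since $O$ is closed under descendants, $desc_{G'}(u_j)\subseteq desc_G(u_j)\setminus desc_G(S^*)$, and by the greedy choice of $u_{k+1}$ this set has weight at most $c_{k+1}\le w(O)/(k+1)$. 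Hence $w(Sol_{S^*})>B-\frac{B}{k+1}\ge\frac{k}{k+1}w(O)$. Redirecting your argument from ``first rejected source'' to ``first missing kernel vertex of $O$'' is exactly what is needed; the remainder of your proof then goes through unchanged.
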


\begin{proof}

Let $O$ be an optimal solution to {\sc SSG}. Let $q$ denote $|\kappa(O)|$
and suppose $\kappa(O)=\{v_1, v_2, \ldots, v_q\}$. If $k \ge q$, then the algorithm finds
$\kappa(O)$ and deduces $O$. Henceforth, $k+1 \le q$.
Wlog., assume that the nodes of $\kappa(O)$ have been sorted according to their marginal contribution; so,
let $\hat w(v_{i})$ denote $w(desc_G(v_i) \setminus desc_G(\{v_1, \dots, v_{i-1}\}))$ and $\hat w(v_i) \ge \hat w(v_{i+1})$, $\forall i \in [1..q-1]$.
The value of the optimal solution $w(O)$ is equal to $\sum_{i=1}^q \hat w(v_{i})$. Let $i^*$ be such that
$\hat w(v_{i}) \le \frac{B}{k+1}$ iff $i \ge i^*$.

If $i^* > k+1$, then $\hat w(v_{1}) \ge \cdots \ge \hat w(v_{k+1}) >\frac{B}{k+1}$
and $w(O) >B$, contradiction with (\ref{bd}). Thus, $i^* \le k+1 \le q$ and the algorithm can guess
$\{v_{1}, \dots, v_{i^*-1}\}$ during its first phase. Henceforth, we analyze the iteration of the algorithm where
$S=\{v_{1}, \dots, v_{i^*-1}\}$. Of course, $w(Sol^*)\ge w(Sol_S)$.

During the second phase, if the algorithm inserts $\{ v_{i^*}, v_{i^*+1}, \ldots, v_{q} \}$, then $Sol_S$ is clearly optimal.
Otherwise, let $j$ be the smallest element of $[i^*..q]$ such that $v_{j} \notin Sol_S$.
We cannot add $v_{j}$ to $Sol_S$ because
$w(Sol_S) > B - w(desc_{G'}(v_j))=B - w(desc_G(v_j) \setminus desc_G(Sol_S))$. Since
$w(desc_G(v_j) \setminus desc_G(Sol_S)) \le w(desc_G(v_j) \setminus desc_G(\{v_{1}, \ldots, v_{j-1}\})) = \hat w(v_j)$, we get that
$w(Sol_S) > B - \frac{B}{k+1} \ge \frac{k}{k+1} w(O)$.  \hfill\qed
\end{proof}

\subsection{{\sc Maximal SSG} in DAG}

By Lemma \ref{lem_local}, we know that constraint (\ref{busy1}) can be replaced by (\ref{busy}) in DAG.
Algorithm \ref{algogreed4} consists in building every possible set $S \subseteq V$ such that $|S| \le k$. The nodes
of $desc_G(S)$ are put in a partial solution $Sol_S$ if $w(desc_G(S)) \le B$. Next, the sinks of $G[V \setminus \left(Sol_S \right)]$ are considered
by nondecreasing weight and  greedily added to $Sol_S$ if the budget is not exceeded, until a feasible solution is obtained.
The algorithm finally ouputs the best solution $Sol^*$ that was constructed (the one of minimum weight).

\begin{algorithm}%[h!]
\caption{}
\label{algogreed4}
%%-----------------------------------------------
\KwData{a DAG $G=(V,A)$, $B$, $w$ and $k$}
\KwResult{a set of nodes $Sol^*$ satisfying (\ref{dc}), (\ref{bd}) and (\ref{busy})}

$Sol^* \gets V$

\For{{\bf all} $S\subseteq V$ such that $|S|\leq k$}{

%Replace $S$ by its kernel $\kappa(S)$

\If{$w(desc_G(S)) \leq B$}{

$V' \gets V\setminus \left(desc_G(S)\right) $

$G'\gets G[V']$

$Sol_S \gets desc_G(S)$

\While{$Sol_S$ does not satisfy \eqref{busy}}{

Within the sinks of $G'$, choose one, say $z$, of minimum weight

%\If{$w(z)+w(S')\leq B$}{

$Sol_S \gets Sol_S \cup \{z\}$

$V' \gets V' \setminus \{z\}$

$G'\gets G[V']$

%}
}

\If{$w(Sol_S) < w(Sol^*)$}{

$Sol^* \gets Sol_S$
}

}
}

\Return  $Sol^*$
\end{algorithm}

\begin{theorem} \label{Theo:PTAS-approxLazy SSG} Algorithm \ref{algogreed4} is a PTAS for  {\sc Maximal SSG}  in DAG.
\end{theorem}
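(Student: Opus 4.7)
The plan is to mirror the structure of Theorem \ref{PTAS-SSG}: identify one iteration of the outer loop whose greedy completion is provably close to an optimum $O$ of \textsc{Maximal SSG}, and exploit the fact that $Sol^*$ is at most the output of that iteration. Let $q=|\kappa(O)|$. If $q\le k$, I would use the iteration $S=\kappa(O)$, for which $desc_G(S)=O$; this set is already maximal because $O$ is a feasible \textsc{Maximal SSG} solution, so the while loop is vacuous and $Sol_S=O$. Henceforth assume $q>k$. I would order $\kappa(O)=\{v_1,\dots,v_q\}$ so that the marginal contributions $\hat{w}(v_i)=w(desc_G(v_i)\setminus desc_G(\{v_1,\dots,v_{i-1}\}))$ are nonincreasing; a pigeonhole argument on $\sum_i \hat{w}(v_i)=w(O)$ then gives $\hat{w}(v_{k+1})\le w(O)/(k+1)$. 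The iteration to analyze is $S=\{v_1,\dots,v_k\}$, for which $Y_0:=desc_G(S)\subseteq O$.

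I would split the analysis of the trajectory $Y_0\subsetneq Y_1\subsetneq\cdots\subsetneq Y_t=Sol_S$ into two cases. If every vertex $z_j$ added in phase 2 lies in $O$, then $Sol_S\subseteq O$; but any sink $s$ of the DAG $G[O\setminus Sol_S]$ is also a sink of $G[V\setminus Sol_S]$ (its out-neighbors lie in $O$ by (\ref{dc}) and in $Sol_S$ by the sink property), so $Sol_S\subsetneq O$ would furnish a feasible strict extension contradicting maximality. Hence $Sol_S=O$ in this case. Otherwise, let $j^*$ be the first step at which $z_{j^*}\notin O$; necessarily $Y_{j^*-1}\subsetneq O$, for $Y_{j^*-1}=O$ would have terminated the loop.

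The key step, and the only delicate point, is to combine two bounds on $w(z_{j^*})$. For an upper bound, the sink-transfer argument again provides a sink $s\in O\setminus Y_{j^*-1}$ of $G[V\setminus Y_{j^*-1}]$; since $Y_{j^*-1}\supseteq desc_G(S)$ contains all earlier marginal layers, $s$ belongs to the layer $desc_G(v_i)\setminus desc_G(\{v_1,\dots,v_{i-1}\})$ of some $i\ge k+1$, so $w(z_{j^*})\le w(s)\le \hat{w}(v_i)\le \hat{w}(v_{k+1})\le w(O)/(k+1)$. For a lower bound, $O\cup\{z_{j^*}\}$ satisfies (\ref{dc}) because the out-neighbors of $z_{j^*}$ lie in $Y_{j^*-1}\subseteq O$, so the maximality of $O$ forces $w(O)+w(z_{j^*})>B$. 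Combining the two inequalities yields $B<w(O)(1+1/(k+1))$, whence $w(Sol^*)\le w(Sol_S)\le B<w(O)(k+2)/(k+1)$. Setting $k=\lceil 1/\varepsilon\rceil$ gives the $(1+\varepsilon)$-approximation, and since the outer loop enumerates $O(|V|^k)$ subsets with polynomial inner work, the overall running time is polynomial for fixed $\varepsilon$.
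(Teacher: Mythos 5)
Your proof is correct and follows essentially the same route as the paper's: guess the $k$ kernel vertices of an optimum $O$ with largest marginal contributions, transfer a sink of $G[O\setminus Sol_S]$ to a sink of $G[V\setminus Sol_S]$ to upper-bound the weight of the first greedily added vertex outside $O$, and use the maximality of $O$ to obtain $w(O)+w(z_{j^*})>B$. The only differences are cosmetic: you spell out the case where all greedy additions lie in $O$ (which the paper merely asserts yields an optimal $Sol_S$), and your pigeonhole over the first $k+1$ marginal layers gives the slightly sharper ratio $(k+2)/(k+1)$ where the paper's bound (\ref{ear}) with $j=k$ gives $(k+1)/k$.
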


\begin{proof}
Let $O$ be an optimal solution  to {\sc Maximal SSG}. Let $q$ denote $|\kappa(O)|$
and suppose $\kappa(O)=\{v_1, v_2, \ldots, v_q\}$. If $k \ge q$, then the algorithm finds
$\kappa(O)$ and deduces $O$. Henceforth, $k+1 \le q$.
As previously, assume that the nodes of $\kappa(O)$ have been sorted according to their marginal contribution.
Let $\hat w(v_{i})$ denote $w(desc_G(v_i) \setminus desc_G(\{v_1, \ldots, v_{i-1}\}))$, and suppose that $\hat w(v_i) \ge \hat w(v_{i+1})$, $\forall i \in \{1, \ldots,q-1\}$.
The value of the optimal solution $w(O)$ is equal to $\sum_{i=1}^q \hat w(v_{i})$. Observe that for every $j \in [1..q-1]$ and every node
$x \in O \setminus desc_G(\{v_1, \ldots, v_j\})$, it holds that:
\begin{equation} \label{ear}
w(x) \le \hat w(v_{j+1}) \le w(desc_G(\{v_1, \ldots, v_j\}))/j\le w(O)/j
\end{equation}
Consider the iteration of the algorithm where $S=\{v_1, \dots,v_k\}$. The corresponding solution $Sol_S$ consists of
$desc_G(\{v_1, \dots,v_k\})$ plus  some other nodes that are subsequently added in a greedy manner.
Let $z_i$ denote the $i$-th node inserted during the greedy phase. Let $s$ be smallest index such that $z_s \notin O$ (if $z_s$ does not exist, then $Sol_S$ must be optimal).
\\Note that $O \setminus \left( desc_G(\{v_1, \ldots,v_k\}) \cup \{z_1, \ldots ,z_{s-1}\}\right) \neq \emptyset$,
otherwise $z_s$ can be added to $O$, violating (\ref{busy1}).
Let $u$ be a sink of $G[V\setminus \left( desc_G(\{v_1, \ldots,v_k\}) \cup \{z_1, \ldots ,z_{s-1}\} \right)]$, such that $u \in O$. Such a vertex exists because
$G[O]$ is a DAG and $O \supset \left( \{v_1, \ldots,v_k\}) \cup \{z_1, \ldots ,z_{s-1}\} \right)$. Using $(\ref{ear})$, we know that
$w(u) \le w(O)/k$.
The greedy phase of the algorithm consists in adding to the current solution a sink of minimum weight, so $w(u) \ge w(z_s)$. Because $O$ is feasible and $O \supset \left( \{v_1, \ldots,v_k\}) \cup \{z_1, \ldots ,z_{s-1}\} \right)$,
$O\cup \{z_s\}$ must violate the budget constraint, i.e. $w(O) +  w(z_s) >B$.  $Sol_S$
satisfies the budget constraint. We deduce that
$w(Sol^*) \le w(Sol_S) \le B < w(O) +  w(z_s) \le w(O) +  w(u) \le \frac{k+1}{k}w(O)$. \hfill\qed
\end{proof}

\begin{remark}\label{rem:greedy}
Note that for $k=0$, Algorithm \ref{algogreed3} and \ref{algogreed4} are greedy algorithms and it is not difficult to prove that their exact
aproximation bounds are $1/2$ and $2$ for
{\sc SSG} and {\sc Maximal SSG}, respectively.
\end{remark}

\section{Conclusion and perspectives}\label{discuss}
We presented in this article some complexity results for the problem of {\sc (Maximal) subset Sum with (weak) digraph constraints}. We designed complexity results according to the class of the input digraph, namely regular graphs (for SSG), directed acyclic graphs and oriented trees.  It would be interesting to see the tightness of the complexity results in these classes. This was done only for SSG in regular graphs. %We may also consider other classes of digraphs like
\smallskip

% Dynamic programming algorithms were proposed in oriented trees for SSG, {\sc Lazy} SSG and SSGW. We also proposed PTAS for SSG and {\sc Lazy} SSG. It remains to seek a dynamic programming algorithm for  {\sc Lazy} SSGW in oriented trees and the existence of a PTAS for  SSGW and {\sc Lazy} SSGW in DAG.

 \bibliographystyle{abbrv}
 \bibliography{bibSubsetSum}

\end{document}